\newcommand{\fv}[1]{\ensuremath{\mathit{fv}({#1})}}
\newcommand{\CL}{\ensuremath{\mathrm{CL}}}
\newcommand{\cl}{\ensuremath{\mathit{cl}}}
\renewcommand{\geq}{\geqslant}
\renewcommand{\leq}{\leqslant}
\newcommand{\pre}{\text{Pre}\xspace}
\newcommand{\post}{\text{Post}\xspace}
\newcommand{\Nat}{\mathbb{N}}
\newcommand{\myplies}{\mathop{\Rightarrow}}
\newcommand{\Reduce}{\textsc{Reduce}\xspace}
\newcommand{\reduc}{\vartriangleright}
\newcommand*{\defeq}{\triangleq}
\newif\ifmarkingSep
\newcommand*{\marking}[1]{%
  \relax
  \markingSepfalse
  \markingScan#1\relax\relax
}
\newcommand{\markingScan}[2]{%
  \ifx\relax#1\empty
  \else
    \ifmarkingSep
      {\ }\relax
    \else
      \markingSeptrue
    \fi
    {#1}{\ast}{#2}\relax
    \expandafter\markingScan
  \fi
}
\def \rightarrowfill{\m@th\mathord{\smash-}\mkern-6mu%
  \cleaders\hbox{$\mkern-2mu\mathord{\smash-}\mkern-2mu$}\hfill
  \mkern-6mu\mathord\rightarrow}
\def \Rightarrowfill{\m@th\mathord{\smash-}\mkern-6mu%
  \cleaders\hbox{$\mkern-2mu\mathord{\smash-}\mkern-2mu$}\hfill
  \mkern-6mu\mathord\Rightarrow}
\def \rightarrowfill{\m@th\mathord{\smash-}\mkern-6mu%
  \cleaders\hbox{$\mkern-2mu\mathord{\smash-}\mkern-2mu$}\hfill
  \mkern-6mu\mathord\rightarrow}
\def \Rightarrowfill{\m@th\mathord{\smash=}\mkern-6mu%
  \cleaders\hbox{$\mkern-2mu\mathord{\smash=}\mkern-2mu$}\hfill
  \mkern-6mu\mathord\Rightarrow}
\def \midrightarrowfill{\m@th\mathord{\smash{\raisebox{.2ex}{$\scriptscriptstyle\mid$}}\!\!\,-}\mkern-6mu%
  \cleaders\hbox{$\mkern-2mu\mathord{\smash-}\mkern-2mu$}\hfill
  \mkern-6mu\mathord\rightarrow}
\def \midRightarrowfill{\m@th\mathord{\smash{\raisebox{.1ex}{$\scriptstyle\mid$}}\!\!\!=}\mkern-6mu%
  \cleaders\hbox{$\mkern-2mu\mathord{\smash=}\mkern-2mu$}\hfill
  \mkern-6mu\mathord\Rightarrow}
\newcommand{\overstackrel}[2]{\mathrel{\mathop{#1}\limits^{#2}}}
\newcommand{\trans}[1]{\mathbin{\smash[t]{\overstackrel{\rightarrowfill}{\ #1\ }}}}
\newcommand{\wtrans}[1]{\mathbin{\smash[t]{\overstackrel{\Rightarrowfill}{\ #1\ }}}}
\tikzset{
  big stealth/.style={
    decoration={markings,mark=at position -(0.1pt) with {\arrow[scale=2*\scale]{stealth}}},
    postaction={decorate},
    shorten >=0.4pt}}
\tikzset{
  big ring/.style={
    decoration={markings,mark=at position -(0.1pt) with {\arrow[scale=1.5*\scale]{o}}},
    postaction={decorate},
    shorten >=8pt*\scale}}
\tikzset{
  big disc/.style={
    decoration={markings,mark=at position -(0.1pt) with {\arrow[scale=1.5*\scale]{*}}},
    postaction={decorate},
    shorten >=8pt*\scale}}
\tikzset{
  big box/.style={
    decoration={markings,mark=at position -(0.1pt) with {\arrow[scale=1.5*\scale]{open square}}},
    postaction={decorate},
    shorten >=8pt*\scale}}
\tikzset{
  big tile/.style={
    decoration={markings,mark=at position -(0.1pt) with {\arrow[scale=1.5*\scale]{square}}},
    postaction={decorate},
    shorten >=8pt*\scale}}
\tikzstyle{state}=[circle, very thick, fill, top color=white, bottom color=white, draw=black, minimum size=40pt, drop shadow]
\tikzstyle{place}=[circle, very thick, fill, top color=white, bottom color=white, draw=black, minimum size=40pt, drop shadow]
\tikzstyle{bplace}=[circle, very thick, fill, top color=cyan, fill
\tikzstyle{vplace}=[circle, very thick, fill, top color=violet, fill
\tikzstyle{trans}=[rectangle, very thick, fill, top color=white, bottom color=white, draw=black, minimum size=32pt, drop shadow]
\tikzstyle{btrans}=[rectangle, very thick, fill, top color=cyan, fill
\tikzstyle{vtrans}=[rectangle, very thick, fill, top color=violet, fill
\tikzstyle{bvtrans}=[rectangle, fill, bottom color=violet, top color=cyan, fill
\tikzstyle{arc}=[thick, big stealth, black]
\tikzstyle{barc}=[thick, big stealth, cyan]
\tikzstyle{varc}=[thick, big stealth, violet]
\tikzstyle{read}=[thick, big disc, black]
\tikzstyle{inhibitor}=[thick, big ring, black]
\tikzstyle{stopwatch}=[thick, big tile, black]
\tikzstyle{stopwatchinhibitor}=[thick, big box, black]
\tikzstyle{priority}=[thick, big stealth, orange]
\tikzstyle{enabling}=[thick, big disc, orange]
\tikzstyle{disabling}=[thick, big ring, orange]
\tikzstyle{token}=[circle, fill, draw=black, minimum size=4pt]
\tikzstyle{glob-options}=[label
\tikzstyle{virtual}=[circle, draw=white, minimum size=20pt]
\def\scale{0.57}
\def\scalenodes{1.0}
\begin{document}

                             \title{A Polyhedral Abstraction for Petri Nets and\\
                                     its Application to SMT-Based Model Checking}

\author{Nicolas Amat\thanks{Address for correspondence: LAAS-CNRS, Vertics, 7 ave.
                                Colonel Roche, 31031 Toulouse cedex 4, France}
  \\
  LAAS-CNRS\\
  Universit\'e de Toulouse, CNRS, INSA, Toulouse, France\\
  nicolas.amat{@}laas.fr
  \and Bernard Berthomieu,\ Silvano Dal~Zilio\\
  LAAS-CNRS\\
  Universit\'e de Toulouse, CNRS, Toulouse, France
  }
\maketitle
\runninghead{N. Amat et al.}{Polyhedral Abstraction for Model Checking Petri Nets}
\begin{abstract}
  We define a new method for taking advantage of net reductions in
  combination with a SMT-based model checker. Our approach consists in
  transforming a reachability problem about some Petri net, into the
  verification of an updated reachability property on a reduced
  version of this net. This method relies on a new state space
  abstraction based on systems of constraints, called polyhedral
  abstraction.

  We prove the correctness of this method using a new notion of
  equivalence between nets. We provide a complete framework to define
  and check the correctness of equivalence judgements; prove that this
  relation is a congruence; and give examples of basic equivalence
  relations that derive from structural reductions.

  Our approach has been implemented in a tool, named SMPT,
  that provides two main procedures: Bounded Model Checking (BMC) and
  Property Directed Reachability (PDR). Each procedure has been
  adapted in order to use reductions and to work with arbitrary Petri
  nets. We tested SMPT on a large collection of queries used
  in the Model Checking Contest. Our experimental results show that
  our approach works well, even when we only have a moderate amount of
  reductions.
\end{abstract}

% \begin{keywords}
%   Petri nets, model checking, abstraction techniques, reachability, SMT solving
% \end{keywords}

%%%%%%%%%%%%%%%%%%%%%%%%%%%%%%%%%%%%%%%%%%%%%%%%%%%%%%%%%%%%%%%%%%%%%%
%%%%%%%%%%%%%%%%%%%%%%%%%%%%%%%%%%%%%%%%%%%%%%%%%%%%%%%%%%%%%%%%%%%%%%

\section{Introduction}\vspace*{-1mm}

A significant focus in model checking research is finding algorithmic
solutions to avoid the ``state explosion problem'', that is finding
ways to analyse models that are out of reach using current
verification methods.  To overcome this problem, it is often useful to
rely on symbolic representation of the state space (like with decision
diagrams) or on an abstraction of the problem, for instance with the
use of logical approaches like SAT solving. We can also benefit from
optimizations related to the underlying model. When analysing Petri
nets, for instance, a valuable technique relies on the transformation
and decomposition of nets, a method pioneered by
Berthelot~\cite{berthelot_transformations_1987} and known as
\emph{structural reduction}.

We recently proposed a new abstraction technique based on
reductions~\cite{berthomieu2018petri,berthomieu_counting_2019}. The idea is to
compute reductions of the form $(N, E, N')$, where: $N$ is an initial net (that
we want to analyse); $N'$ is a residual net (hopefully much simpler than $N$);
and $E$ is a system of linear constraints. The idea is to preserve enough
information in $E$ so that we can rebuild the reachable markings of $N$ knowing
only the ones of $N'$. In a nutshell, we capture and abstract the effect of
reductions using a set of linear constraints between the places of $N$ and $N'$.

In this paper, we show that this approach works well when combined
with SMT-based verification. In particular, it provides an elegant way
to integrate reductions into known verification procedures. To support
this statement, we provide a full theoretical framework based on the
definition of a new equivalence relation between Petri nets
(Sect.~\ref{sec:reach-equiv-net}) and show how to use it for checking
safety and invariant properties (Sect.~\ref{sec:smt-based-model}). Our
method does not impose restrictions on the syntax of nets, such as
constraints on the weights of arcs or bounds on the marking of
places.

We have previously applied this technique in a symbolic model checker,
called \textsc{Tedd}, that uses Set Decision
Diagrams~\cite{thierry2009hierarchical} in order to generate an
abstract representation for the state space of a net $N$.
In practice, we can often reduce a Petri net $N$ with $n$ places (from
a {high dimensional} space) into a residual net $N'$ with far fewer
places, say $n'$ (in a lower-dimensional space). Hence, with our
approach, we can represent the state space of $N$ as the ``inverse
image'', by the linear system $E$, of a subset of vectors of dimension
$n'$. This technique can result in a very compact representation of
the state space. We observed this effect during the recent editions of
the Model Checking Contest (MCC)~\cite{mcc2019}, where our tool
finished at the first place for three consecutive years in the
\emph{State Space} category.  In this paper, we show that we can
benefit from the same ``dimensionality reduction'' effect when using
automatic deduction procedures. Actually, since we are working with
(possibly unbounded) vectors of integers, we need to consider SMT
instead of SAT solvers. We show that it is enough in our case to use
solvers for the theory of Quantifier-Free formulas on Linear Integer
Arithmetic, what is known as QF-LIA in SMT-LIB~\cite{BarFT-RR-17}.

To adapt our approach with the theory of SMT solving, we define an
abstraction based on Boolean combinations of linear constraints
between integer variables (representing the marking of places). This
results in a new relation $N \reduc_{E} N'$, which is the counterpart
of the tuple $(N, E, N')$ in a SMT setting. We named this relation a
\emph{polyhedral abstraction} in reference to ``polyhedral models''
used in program optimization and static
analysis~\cite{besson1999polyhedral,feautrier1996automatic}. Indeed,
like in these works, we propose an algebraic representation of the
relation between a model and its state space based on the sets of
solutions to systems of constraints. We should
also often use the term \emph{$E$-abstraction equivalence} to
emphasize the importance of the linear system $E$. One of our main results is
that, given a relation $N \reduc_{E} N'$, we can derive a formula
$\tilde{E}$ such that $F$ is an invariant for $N$ if and only if
$\tilde{E} \wedge F$ is an invariant for the net $N'$.
\eject

\noindent Since the
residual net may be much simpler than the initial one, we expect that
checking the invariant $\tilde{E} \wedge F$ on $N'$ is more efficient
than checking $F$ on $N$.

Our approach has been implemented and computing experiments show that
reductions are effective on a large benchmark of queries. We provide a
prototype tool, called SMPT, that includes an adaptation of two
procedures, Bounded Model Checking (BMC)~\cite{biere_symbolic_1999}
and Property Directed Reachability
(PDR)~\cite{jhala_sat-based_2011,hutchison_understanding_2012}. Each
of these methods has been adapted in order to use reductions and to
work with arbitrary Petri nets. We tested SMPT on a large collection
of queries ($13\,265$ test cases) used during the 2020 edition of the
Model Checking Contest and participated, with our tool, in the two
reachability competitions in the MCC'2021. Our experimental results
show that our approach works well, even when we only have a moderate
amount of reductions.

\paragraph{Outline and contributions.}
The paper is organized as follows. We start by defining the notations
used in our work in Sect.~\ref{sec:petri-nets-linear}, where we rely
on a presentation of Petri net semantics that emphasizes the
relationship with the QF-LIA theory. In
Sect.~\ref{sec:reach-equiv-net}, we define our notion of polyhedral
abstraction and prove several of its properties. We give a description
of some of the structural reductions used in our approach and show how
they correspond to axioms of our polyhedral abstraction
equivalence. We also prove that polyhedral abstractions are preserved by
composition and transitivity, which gives a simple way to check the
equivalence between two complex nets. We use these results in
Sect.~\ref{sec:smt-based-model} and~\ref{sec:implementation} to
describe an adaptation of two SMT-based, model checking algorithms for
Petri nets that can take advantage of reductions and prove their
correctness. Before concluding, we report on experimental results on
an extensive collection of nets and queries. Our results are quite
promising. For example, on our benchmark, we observe that we are able
to compute twice as many results using reductions than without.

Many results and definitions were already presented in a shorter
version of the paper~\cite{pn2021}. This extended version contains
several additions that improve on the two main contributions of our
work.

Concerning our definition of a polyhedral abstraction for Petri nets,
we describe more precisely the reduction rules used in our approach
and give more detailed proofs and definitions about the properties of
our equivalence. With these additions, we give a stand-alone
definition of $E$-abstraction equivalence that provides a more
algebraic (and therefore less monolithic) approach than the one used
previously in our work about computing the number of reachable
states~\cite{berthomieu_counting_2019}. We believe that our
equivalence could be reused in other settings.

Concerning our application to SMT-based model checking. We give a
detailed description of our adaptation of the PDR procedure for
model checking Petri nets in the case of coverability
properties.

%%%%%%%%%%%%%%%%%%%%%%%%%%%%%%%%%%%%%%%%%%%%%%%%%%%%%%%%%%%%%%%%%%%%%%
%%%%%%%%%%%%%%%%%%%%%%%%%%%%%%%%%%%%%%%%%%%%%%%%%%%%%%%%%%%%%%%%%%%%%%

\section{Petri nets and linear arithmetic constraints}
\label{sec:petri-nets-linear}

Some familiarity with Petri nets is assumed from the reader. We recall
some basic terminology. Throughout the text, comparison $(=, \geq)$
and arithmetic operations $(-, +)$ are extended pointwise to functions
and tuples.

\begin{definition}
A \textit{Petri net} $N$ is a tuple $(P, T, \pre, \post)$ where:
\begin{itemize}
  \item $P = \{p_1, \dots, p_n\}$ is a finite set of places,
  \item $T = \{t_1, \dots, t_k\}$ is a finite set of transitions (disjoint from
  $P$),
  \item $\pre : T \rightarrow (P \rightarrow \mathbb{N})$ and $\post :
T \rightarrow (P \rightarrow \mathbb{N})$ are the pre- and post-condition
functions (also called the flow functions of $N$).
\end{itemize}
\end{definition}

\subsection{States}
\label{sec:states}

A state $m$ of a net, also called a \emph{marking},
is a mapping $m : P \rightarrow \mathbb{N}$ which assigns a number of
\emph{tokens}, $m(p)$, to each place $p$ in $P$. A marked net $(N, m_0)$ is a
pair composed from a net and an initial marking $m_0$.

A marking $m$ is $k$-{bounded} when each place has at most $k$ tokens;
property $\bigwedge_{p \in P} m(p) \le k$ is true. Likewise, a marked
Petri net $(N, m_0)$ is bounded when there is $k$ such that all
reachable markings are $k$-bounded. A net is \emph{safe} when it is
$1$-bounded.  In our work, we consider \emph{generalized} Petri nets
(in which net arcs may have weights larger than $1$) and we do not
restrict ourselves to bounded nets.

\subsection{Behaviour}
\label{sec:semantics}

A transition $t \in T$ is \textit{enabled} at marking $m \in \Nat^P$
when $m(p) \ge \pre(t,p)$ for all places $p$ in $P$. (We can
also simply write $m \geq \pre(t)$, where $\geq$ stands for
the component-wise comparison of markings.) A marking $m' \in \Nat^P$
is reachable from a marking $m \in \Nat^P$ by firing transition $t$,
denoted $m \trans{t} m'$, if: (1) transition $t$ is enabled at
$m$; and (2) $m' = m - \pre(t) + \post(t)$.

By extension, we say that a \textit{firing sequence}
$\varrho = t_1\, \dots\, t_n \in T^*$ can be fired from $m$, denoted
$m \wtrans{\varrho} m'$, if there exist markings $m_0, \dots, m_n$ such
that $m = m_0$, $m' = m_n$ and $m_i \trans{t_{i+1}} m_{i+1}$ for all
$i$ in the range $0..n-1$.

\medskip
We denote $R(N, m_0)$ the set of markings reachable from $m_0$ in
$N$:
\begin{equation}
  R(N, m_0) \defeq \{ m \mid \exists \varrho. \  m_0 \wtrans{\varrho} m\}
\end{equation}
The \emph{semantics} of a marked net is the Labeled Transition System
(LTS), with nodes in $R(N, m_0)$ and edges between states $(m, m')$
whenever $m \trans{t} m'$. We focus mostly on reachable states in our
work and will therefore seldom refer to the LTS of the net.

\subsection{Labels and observations}
\label{sec:observations-labels}

In the following, we will often consider that each transition is
associated with a label (a symbol taken from an alphabet $\Sigma$). In
this case, we assume that a net is associated with a labeling function
$l : T \to \Sigma \cup \{ \tau \}$, where $\tau$ is a special symbol
for the silent action name. Every net has a default labeling function
$l_N$ such that $\Sigma = T$ and $l_N(t) = t$ for every transition
$t \in T$.

We can extend the notion of labels to sequences of transitions in a
straightforward way. Given a relabeling function, $l$, we can extend
it into a function from $T^\star$ into $\Sigma^\star$ such that
$l(\epsilon) = \epsilon$, $l(\tau) = \epsilon$ and
$l(\varrho\, t) = l(\varrho)\, l(t)$. Given a sequence of labels
$\sigma$ in $\Sigma^\star$, we write $(N, m) \wtrans{\sigma} (N, m')$
when there is a firing sequence $\varrho$ in $T^\star$ such that
$(N, m) \wtrans{\varrho} (N, m')$ and $\sigma = l(\varrho)$. We say in
this case that $\sigma$ is an \textit{observation sequence} of the
marked net $(N, m)$.

\subsection{Graphical notations}
\label{sec:graphical-notations}

\begin{figure}[tb]
  \centering
  \includegraphics{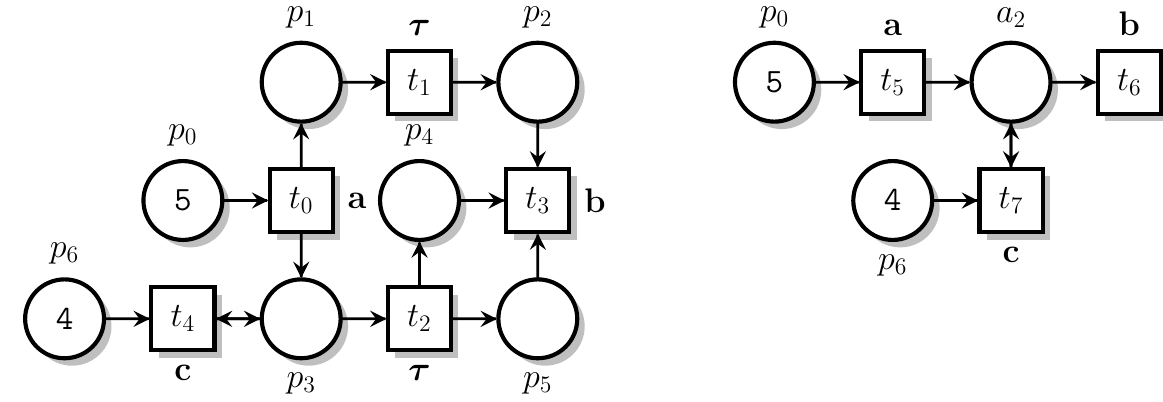}
  \caption{An example of Petri net, $M_1$ (left), and one of its
    polyhedral abstraction, $M_2$ (right), with
    $E_M \defeq (p_5 = p_4) \wedge (a_1 = p_1 + p_2) \wedge
    (a_2 = p_3 + p_4) \wedge (a_1 = a_2)$.\label{fig:stahl}}
\end{figure}

We use the standard graphical notation for nets, where places are
depicted as circles and transitions as squares. For marking, we use
the notation $\marking{{p}{k}}$ to state that place $p$ has $k$ tokens
and we omit empty places. With the net displayed in
Fig.~\ref{fig:stahl} (left), and with our convention, the initial
marking is $m_0 \defeq \marking{{p_0}{5}{p_6}{4}}$ (only $5$ and $4$
tokens in places $p_0$ and $p_6$).  We have
$m_0 \wtrans{\varrho} m'_0$ with
$\varrho \defeq t_0\,t_0\,t_1\,t_1\,t_2\,t_3\,t_4$ and
$m'_0 \defeq \marking{{p_0}{3}{p_2}{1}{p_3}{1}{p_6}{3}}$; and
therefore $m_0 \wtrans{\mathbf{a\,a\,b\,c}} m'_0$ when we only look at
(observable) labels.

\subsection{Properties}
\label{sec:properties}

We can define many properties on the markings of a net $N$ using
Boolean combinations of linear constraints with integer variables.
Assume that we have a
marked net $(N, m_0)$ with set of places $P = \{p_1, \dots, p_n\}$.
We can associate a marking $m$ over $P$ to the formula
$\underline{m}(x_1, \dots, x_n)$, below. In this context, an equation $x_i = k$ means that there must be $k$ tokens in place $p_i$.
Formula $\underline{m}$ is obviously a conjunction of literals, what is called a \emph{cube} in
\cite{jhala_sat-based_2011}.
\begin{equation}
  \underline{m}(x_1, \dots, x_n) \defeq (x_1 = m(p_1)) \wedge \dots
  \wedge (x_k = m(p_k))\label{eq:marking}
\end{equation}

In the remainder, we use the notation $\phi(\vec{x})$ for the
declaration of a formula $\phi$ with variables in $\vec{x}$, instead
of the more cumbersome notation $\phi(x_1, \dots, x_n)$. We also
simply use $\phi(\vec{v})$ instead of
$\phi\{x_1 \leftarrow v_1\}\dots\{x_n \leftarrow v_n\}$, for the
substitution of $\vec{x}$ with $\vec{v}$ in $\phi$. We often
use place names as variables (or parameters) and use $\vec{p}$ for the
vector $(p_1, \dots, p_n)$. We also often use $\underline{m}$ instead
of $\underline{m}(\vec{p})$.

\begin{definition}[Model of a formula]
  We say that a marking $m$ is a model of (or $m$ \emph{satisfies})
  property $\phi$, denoted $m \models \phi$, when formula
  $\phi(\vec{x}) \wedge \underline{m}(\vec{x})$ is satisfiable. In
  this case $\phi$ may use variables that are not necessarily in $P$.
\end{definition}

We can use this approach to reframe many properties on Petri nets. For
instance the notion of safe markings, described previously: a marking
$m$ is safe when $m \models \mathrm{BND}_1(\vec{x})$, where
$\mathrm{BND}_k$ is a predicate in QF-LIA defined as:
\begin{equation}
  \mathrm{BND}_k(\vec{x}) \defeq \bigwedge_{i \in 1..n} (x_i \leq
  k)
\end{equation}

Likewise, the property that transition $t$ is enabled corresponds to
the predicate $\mathrm{ENBL}_t$ below, in the sense that $t$ is
enabled at $m$ when $m \models \mathrm{ENBL}_t(\vec{x})$.
\begin{equation}
  \mathrm{ENBL}_t(\vec{x}) \defeq \bigwedge_{i \in 1..n} (x_i \geq
  \pre(t, p_i))
\end{equation}

Another example is the definition of \emph{deadlocks}, which are
characterized by formula
$\mathrm{DEAD}(\vec{x}) \defeq \bigwedge_{t \in T} \neg
\mathrm{ENBL}_t(\vec{x})$.
We give other examples in Sect.~\ref{sec:implementation}, when we
encode the transition relation of a Petri net using formulas.\\

In our work, we focus on the verification of \textit{safety}
properties on the reachable markings of a marked net $(N,
m_0)$. Examples of the properties that we want to check include: whether
some transition $t$ is enabled (commonly known as
\emph{quasi-liveness}); whether there is a deadlock; whether some
invariant between place markings is true; \dots

\begin{definition}[Invariant and reachable properties]
  Property $\phi$ is an invariant on $(N, m_0)$ if and only if we have
  $m \models \phi$ for all $m \in R(N,m_0)$. We say that $\phi$ is
  reachable when there exists $m \in R(N,m_0)$ such that
  $m \models \phi$.
\end{definition}

In our experiments, we consider the two main kinds of
\emph{reachability formulas} used in the MCC: $\mathrm{AG}\, \phi$
(true only when $\phi$ is an invariant), and $\mathrm{EF}\, \phi$
(true when $\phi$ is reachable), where $\phi$ is a Boolean combination
of atomic properties (it has no modalities). At various times, we will
use the fact that $\phi$ is invariant if and only if its negation is
not reachable: $\mathrm{EF}\, \neg \phi$ is false.

%%%%%%%%%%%%%%%%%%%%%%%%%%%%%%%%%%%%%%%%%%%%%%%%%%%%%%%%%%%%%%%%%%%%%%
%%%%%%%%%%%%%%%%%%%%%%%%%%%%%%%%%%%%%%%%%%%%%%%%%%%%%%%%%%%%%%%%%%%%%%

\section{Polyhedral abstraction and $E$-equivalence}
\label{sec:reach-equiv-net}

We define a new notion, called \textit{$E$-abstraction}, that is used
to state a correspondence between the set of reachable markings of two
Petri nets ``modulo'' some system of linear constraints
$E$. Basically, we have that $(N_2, m_2)$ is an abstraction of
$(N_1, m_1)$ when, for every sequence $m_1 \wtrans{\sigma_1} m'_1$ in
$N_1$, there must exist a sequence $m_2 \wtrans{\sigma_2} m'_2$ in
$N_2$ such that $E \wedge \underline{m'_1} \wedge \underline{m'_2}$ is
satisfiable. We also ask that there exists such sequence for every
marking $m_2'$ such that
$E \wedge \underline{m'_1} \wedge \underline{m'_2}$ is
satisfiable. Therefore, knowing $E$, we can compute the reachable
markings of $N_1$ from those of $N_2$.

\medskip
We also ask for the observation sequences, $\sigma_1$ and $\sigma_2$ in this
case, to be equal. With the addition of this constraint, we prove that the
reflexive and symmetric closure of an $E$-abstraction is also a congruence,
which we call an $E$-equivalence (defined formally in Sect.~\ref{equivalence}).

We can illustrate these notions using the two nets $M_1, M_2$ in
Fig.~\ref{fig:stahl} and the linear constraint
$E_M \defeq (p_5 = p_4) \wedge (a_1 = p_1 + p_2) \wedge (a_2 = p_3 +
p_4) \wedge (a_1 = a_2)$. Recall that marking
$m'_1 \defeq \marking{{p_0}{3}{p_2}{1}{p_3}{1}{p_6}{3}}$ is reachable
in $M_1$. We also have that $E_M \wedge \underline{m'_1}$ entails
$(p_0 = 3) \wedge (p_6 = 3) \wedge (a_2 = 1) $. Hence, if we prove
that $(M_1, m_1)$ is $E_M$-equivalent to $(M_2, m_2)$, we can conclude
that the marking $m'_2 \defeq \marking{{a_2}{1}{p_0}{3}{p_6}{3}}$ is
reachable in $M_2$.

Conversely, we have several markings (exactly $4$) in $M_1$ that
correspond to the constraint
$E_M \wedge \underline{m'_2} \equiv (p_5 = p_4) \wedge (p_1 + p_2 = 1)
\wedge (p_3 + p_4 = 1) \wedge \underline{m'_2}$. All these markings
are reachable in $M_1$ using the same observation sequence
$\mathbf{a\,a\,b\,c}$. More generally, each marking $m'_2$ of $N_2$
can be associated to a convex set of markings of $N_1$, defined as the
set of positive integer solutions of $E \wedge
\underline{m'_2}$. Moreover, these sets form a partition of
$R(N_1, m_1)$. This motivates our choice of calling this relation a
\emph{polyhedral abstraction}.

While our approach does not dictate a particular method for finding
pairs of equivalent nets, we rely on an automatic approach based on the
use of \emph{structural net reductions}. When the net $N_1$ can be
reduced, we will obtain a resulting net ($N_2$) and a condition ($E$)
such that $N_2$ is a polyhedral abstraction of $N_1$. In this case,
$E$ will always be expressed as a conjunction of equality and inequation constraints
between linear combinations of integer variables (the marking of
places). This is why we should often use the term \emph{reduction
  constraints} when referring to $E$. Our goal is to transform any
reachability problem on the net $N_1$ into a reachability problem on
the (reduced) net $N_2$, which is typically much easier to check.

\subsection{Solvable systems and $E$-equivalence}
\label{equivalence}

Before defining our equivalence more formally, we need to introduce some
constraints on the condition, $E$, used to correlate the markings of two
different nets. We say that a pair of markings $(m_1,m_2)$ are \emph{compatible}
(over respective sets of places $P_1$ and $P_2$) when they have the same number
of tokens on their shared places, meaning $m_1(p) = m_2(p)$ for all $p$ in $P_1
\cap P_2$. This is a necessary and sufficient condition for formula
$\underline{m_1} \wedge \underline{m_2}$ to be satisfiable. When this is the
case, we denote $m_1 \uplus m_2$ the unique marking in $(P_1 \cup P_2)$ defined by:
\begin{equation}
  (m_1 \uplus m_2)(p) =
  \begin{cases}
    m_1(p) & \text{ if $p \in P_1$}, \\
    m_2(p) & \text{ otherwise.}
  \end{cases}
\end{equation}
Equipped with this notion, we can say that two markings $m_1, m_2$
(defined over the set of places $P_1, P_2$ of two nets $N_1$ and
$N_2$) are ``a solution'' of constraints $E$ when they are compatible
with each other and $E \wedge \underline{m_1 \uplus m_2}$ is
satisfiable.

\medskip
This leads to the notion of \emph{solvable system}, such that every
reachable marking of $N_1$ can be paired with at least one reachable
marking of $N_2$ to form a solution of $E$; and reciprocally.

\begin{definition}[Solvable system of reduction constraints]
  A system of linear constraints, $E$, is solvable for $N_1, N_2$ if
  and only if for all reachable markings $m_1$ in $N_1$ there exists
  at least one marking $m_2$ of $N_2$, compatible with $m_1$, such
  that ${m_1 \uplus m_2} \models E$, and vice versa for every
  reachable marking $m_2$ in $N_2$.
\end{definition}

In the following, when we use an $E$-abstraction equivalence between
two marked nets $(N_1, N_2)$, we ask that condition $E$ be
\emph{solvable for $N_1, N_2$} (see condition A2).
While this property is not essential for most of our results, it simplifies our
presentation and it will always be true for the reduction constraints generated
with our method. On the other hand, we do not prohibit to use variables in $E$
that are not in $P_1 \cup P_2$. Actually, such a situation will often occur in
practice, when we start to chain several reductions.

We define our notion of $E$-abstraction as an equivalence relation between the
markings reached using equal ``observation sequences''.  An $E$-abstraction
equivalence (shortened as $E$-equivalence) is an abstraction in both directions.

\begin{definition}[$E$-abstraction and $E$-abstraction equivalence]
  Assume $N_1 = (P_1, T_1, \pre_1, \post_1)$ and
  $N_2 = (P_2, T_2, \pre_2, \post_2)$ are two Petri
  nets with respective labeling functions $l_1, l_2$, over the same
  alphabet $\Sigma$, and $E$ a system of linear constraints. We say that the marked net $(N_2, m_2)$ is an
  $E$-abstraction of $(N_1, m_1)$, denoted
  $(N_1, m_1) \sqsupseteq_E (N_2, m_2)$, if and only if:
  \begin{description}
  \item[(A1)] the initial markings are compatible with $E$, meaning
    $m_1 \uplus m_2 \models E$.
	
  \item[(A2)] for all firing sequences $(N_1, m_1) \wtrans{\varrho_1} (N_1,
    m_1')$ in $N_1$, there is at least one marking $m'_2$ over $P_2$ such
    that $m_1' \uplus m_2' \models E$ (e.g. solvable), and for all
    markings $m_2'$ over $P_2$ such that $m_1' \uplus m_2' \models E$ there must
    exist a firing sequence $\varrho_2 \in T_2^\star$ such that $(N_2, m_2)
    \wtrans{\varrho_2} (N_2, m_2')$ and $l_1(\varrho_1) = l_2(\varrho_2)$.
  \end{description}
  We say that $(N_1, m_1)$ is $E$-equivalent to $(N_2, m_2)$, denoted
  $(N_1, m_1) \reduc_E (N_2, m_2)$, when we have both
  $(N_1, m_1) \sqsupseteq_E (N_2, m_2)$ and
  $(N_2, m_2) \sqsupseteq_E (N_1, m_1)$.
\end{definition}

Notice that condition (A2) is defined only for observation sequences
starting from the initial marking of $N_1$. Hence the relation is
usually not true on every pair of matched markings; it is not a
bisimulation. Also, condition (A2) can be defined in an alternative
way using observation sequences.

\begin{description}
\item[(A2)] for all observation sequences $\sigma$ such that $(N_1, m_1)
  \wtrans{\sigma} (N_1, m_1')$, there is at least one marking $m'_2$ over $P_2$
  such that $m_1' \uplus m_2' \models E$, and for all markings $m_2'$ over $P_2$
  such that $m_1' \uplus m_2' \models E$ we must have $(N_2, m_2)
  \wtrans{\sigma} (N_2, m_2')$.
\end{description}

By definition, relation $\reduc_E$ is symmetric. We deliberately use a
``comparison symbol'' for our equivalence, $\reduc$, in order to
stress the fact that we expect the fact that $N_2$ is a reduced
version of $N_1$. In particular, we expect that $|P_2| \le |P_1|$.

\subsection{Basic properties of polyhedral abstraction}

We prove that we can use $E$-equivalence to check the reachable markings of
$N_1$ simply by looking at the reachable markings of $N_2$. We give a first
property that is useful in the context of {bounded model checking}, when we try
to find a counter-example to a property by looking at firing sequences with
increasing length. Our second property is useful for checking invariants, and is
at the basis of our implementation of the PDR method for Petri nets.

\begin{lemma}[Reachability checking]\label{reachability} Assume $(N_1, m_1)
  \reduc_E (N_2, m_2)$. Then for all $m_1'$ in $R(N_1, m_1)$ there is $m_2'$ in
  $R(N_2,m_2)$ such that $m_1' \uplus m_2' \models E$.
\end{lemma}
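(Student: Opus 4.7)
The plan is to unpack the definition of $\reduc_E$ directly; this lemma is essentially a corollary of condition (A2) together with the fact that any reachable marking is witnessed by some firing sequence from the initial marking.

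First I would fix an arbitrary $m_1' \in R(N_1, m_1)$. By definition of $R(N_1, m_1)$, there exists a firing sequence $\varrho_1 \in T_1^\star$ such that $(N_1, m_1) \wtrans{\varrho_1} (N_1, m_1')$. Since $(N_1, m_1) \reduc_E (N_2, m_2)$ entails in particular $(N_1, m_1) \sqsupseteq_E (N_2, m_2)$, I can apply clause (A2) of the $E$-abstraction definition to this firing sequence.

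The first half of (A2) (the solvability clause) then yields the existence of at least one marking $m_2'$ over $P_2$ such that $m_1' \uplus m_2' \models E$. Feeding this particular $m_2'$ into the second half of (A2) provides a firing sequence $\varrho_2 \in T_2^\star$ with $(N_2, m_2) \wtrans{\varrho_2} (N_2, m_2')$ (the label constraint $l_1(\varrho_1) = l_2(\varrho_2)$ is not needed here). The existence of $\varrho_2$ shows $m_2' \in R(N_2, m_2)$, which together with $m_1' \uplus m_2' \models E$ gives exactly the claim.

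There is really no obstacle: the only step requiring a moment's care is noting that (A2) has two conjuncts (``there is at least one $m_2'$'' and ``for all such $m_2'$''), and the lemma only needs the first conjunct combined with the existential part of the second. The symmetric half of $\reduc_E$, i.e.\ $(N_2, m_2) \sqsupseteq_E (N_1, m_1)$, is not used in this direction; it would be required if one wanted the converse statement, that every reachable marking of $N_2$ is matched by some reachable marking of $N_1$.
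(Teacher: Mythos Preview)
Your proof is correct and follows essentially the same approach as the paper's own proof: pick a firing sequence witnessing $m_1'\in R(N_1,m_1)$, invoke condition (A2) to obtain an $m_2'$ with $m_1'\uplus m_2'\models E$ together with a firing sequence in $N_2$ reaching it, and conclude. Your additional remarks---that only the $(N_1,m_1)\sqsupseteq_E(N_2,m_2)$ direction is needed and that the label constraint is irrelevant here---are accurate observations that the paper leaves implicit.
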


\begin{proof}
  Since $m_1'$ is reachable, there must be a firing sequence $\varrho_1$ in $N_1$
  such that $(N_1, m_1) \wtrans{\varrho_1} (N_1, m_1')$. By condition (A2), there
  must be some marking $m_2'$ over $P_2$, compatible with $m_1'$, such that
  $m_1' \uplus m_2' \models E$ and $(N_2, m_2) \wtrans{\varrho_2} (N_2, m_2')$
  (for some firing sequence $\varrho_2$). Therefore we have $m_2'$ reachable in
  $N_2$ such that $m_1' \uplus m_2' \models E$.
\end{proof}

Lemma~\ref{reachability} can be used to find a counter-example $m'_1$, to some
property $F$ in $N_1$, just by looking at the reachable markings of $N_2$.
Indeed, it is enough to find a marking $m_2'$ reachable in $N_2$ such that
${m_2'} \models E \wedge \neg F$. This is the result we use in our
implementation of the BMC method.

Our second property can be used to prove that every reachable marking of $N_2$
can be traced back to at least one marking of $N_1$ using the reduction
constraints. (While this mapping is surjective, it is not a function, since a
state in $N_1$ could be associated with multiple states in $N_2$.)

\begin{lemma}[Invariance checking]\label{iclemma} Assume $(N_1, m_1) \reduc_E
  (N_2, m_2)$. Then for all pairs of markings $m_1', m'_2$ of $N_1, N_2$ such
  that $m_1' \uplus m'_2 \models E$ and $m'_2 \in R(N_2,m_2)$ it is the case
  that $m_1' \in R(N_1, m_1)$.
\end{lemma}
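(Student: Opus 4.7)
The plan is to exploit the symmetry built into the definition of $E$-equivalence. By definition of $\reduc_E$, the hypothesis $(N_1, m_1) \reduc_E (N_2, m_2)$ gives us not only $(N_1, m_1) \sqsupseteq_E (N_2, m_2)$, but also the reverse abstraction $(N_2, m_2) \sqsupseteq_E (N_1, m_1)$. The statement we want to prove is exactly the kind of conclusion that condition (A2) of $\sqsupseteq_E$ yields when applied in this reverse direction, so the argument should be a direct instantiation of (A2).

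First I would take an arbitrary pair $m'_1, m'_2$ satisfying the assumptions of the lemma, namely $m'_1 \uplus m'_2 \models E$ and $m'_2 \in R(N_2, m_2)$. From $m'_2 \in R(N_2, m_2)$ I obtain a firing sequence $\varrho_2 \in T_2^\star$ with $(N_2, m_2) \wtrans{\varrho_2} (N_2, m'_2)$. Then I apply condition (A2) of the abstraction $(N_2, m_2) \sqsupseteq_E (N_1, m_1)$: for this firing sequence in $N_2$, and for every marking of $N_1$ (here, our chosen $m'_1$) that is compatible with $m'_2$ and satisfies $m'_1 \uplus m'_2 \models E$, there must exist a firing sequence $\varrho_1 \in T_1^\star$ with $(N_1, m_1) \wtrans{\varrho_1} (N_1, m'_1)$ and $l_2(\varrho_2) = l_1(\varrho_1)$. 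In particular $m'_1 \in R(N_1, m_1)$, which is what we need. The labels matter for the full statement of (A2), but are irrelevant for the reachability conclusion here; I would simply discard them.

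The only subtlety (and the place a reader might want a sentence of justification) is checking that the hypotheses of (A2) are satisfied by $m'_1$: namely that $m'_1$ is compatible with $m'_2$ and that $m'_1 \uplus m'_2 \models E$. Both are immediate from the assumption $m'_1 \uplus m'_2 \models E$, since the notation $m'_1 \uplus m'_2$ is only defined for compatible markings, so its appearance in the hypothesis already presupposes compatibility. There is no real obstacle: the lemma is essentially a bookkeeping consequence of the symmetry of $\reduc_E$ together with the ``forall markings $m'_1$ satisfying the constraint'' quantification inside (A2), and the proof is two or three lines long.
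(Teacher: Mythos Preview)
Your proposal is correct and matches the paper's own proof essentially line for line: both pick a firing sequence $\varrho_2$ witnessing $m_2' \in R(N_2,m_2)$, apply condition (A2) of the reverse abstraction $(N_2,m_2) \sqsupseteq_E (N_1,m_1)$ to the given $m_1'$ with $m_1' \uplus m_2' \models E$, obtain a firing sequence $\varrho_1$ reaching $m_1'$ in $N_1$, and discard the label information.
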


\begin{proof}
  Take $m_1', m_2'$ a pair of markings in $N_1, N_2$ such that $m_1' \uplus m_2'
  \models E$ and $m_2' \in R(N_2,m_2)$. Hence there is a firing sequence
  $\varrho_2$ such that $(N_2, m_2) \wtrans{\varrho_2} (N_2, m_2')$. By condition
  (A2), since $m_1' \uplus m_2' \models E$, there must be a firing sequence in
  $N_1$, say $\varrho_1$, such that $(N_1, m_1) \wtrans{\varrho_1} (N_1, m_1')$.
  Hence $m_1' \in R(N_1, m_1)$.
\end{proof}

\vspace*{-1.8mm}
Using Lemma~\ref{iclemma}, we  can easily extract an invariant on $N_1$ from an
invariant on $N_2$. Basically, if property $E \wedge F$ is an invariant on $N_2$
(where $F$ is a formula whose variables are in $P_1$) then we can prove that $F$
is an invariant on $N_1$. This property (the \emph{invariant conservation}
theorem of Sect.~\ref{sec:smt-based-model}) ensures the soundness of the model
checking technique implemented in our tool.

\subsection{Composition laws}

We prove that polyhedral abstraction is a transitive relation
(Th~\ref{th:transitivity}) that is also closed by synchronous
composition (Th~\ref{th:composition}) and relabeling
(Th~\ref{th:relabeling}). These results can be used as a set of
``algebraic laws'' allowing us to derive complex equivalence
assertions from much simpler instances, or \emph{axioms}, inside
arbitrary contexts. We give an example of such reasoning in
Sect.~\ref{sec:deriv-e-equiv}.

\medskip
Before defining our composition laws, we start by describing
sufficient conditions in order to safely compose equivalence
relations. The goal here is to avoid {inconsistencies} that could
emerge if we inadvertently reuse the same variable in different
reduction constraints.

\medskip
The \emph{fresh variables} in an equivalence statement $\mathrm{EQ} : (N_1, m_1)
\reduc_E (N_2, m_2)$ are the variables occurring in $E$ but not in $P_1 \cup
P_2$. (These variables can be safely ``alpha-converted'' in $E$ without changing
any of our results.)  We say that a net $N_3$ is \emph{compatible} with respect
to $\mathrm{EQ}$ when $(P_1 \cup P_2) \cap P_3 = \emptyset$ and there are no
fresh variables of $\mathrm{EQ}$ that are also places in $P_3$. Likewise we say
that the equivalence statement $\mathrm{EQ}': (N_2, m_2) \reduc_{E'} (N_3, m_3)$
is \textit{compatible} with $\mathrm{EQ}$ when $P_1 \cap P_3 \subseteq P_2$ and
the fresh variables of $\mathrm{EQ}$ and $\mathrm{EQ}'$ are disjoint.

The composition laws stated in the following theorems are useful to build larger
equivalences from simpler axioms (reduction rules). We show some examples of
reductions in the next section and how they occur in the example of
Fig.~\ref{fig:stahl}.

\subsubsection{Preservation by chaining}

We prove that we can chain equivalences together in order to derive
more general reduction rules. When doing so, we need to combine
constraints together.

\begin{theorem}\label{th:transitivity}
	Assume we have two compatible equivalence statements
	$(N_1, m_1) \reduc_E (N_2, m_2)$ and
	$(N_2, m_2) \reduc_{E'} (N_3, m_3)$, then
	$(N_1, m_1) \reduc_{E \land E'} (N_3, m_3)$.
\end{theorem}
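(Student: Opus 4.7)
}
The plan is to show $(N_1, m_1) \sqsupseteq_{E \land E'} (N_3, m_3)$; the reverse direction then follows by the same argument applied to the already-established symmetric relations $(N_2, m_2) \sqsupseteq_E (N_1, m_1)$ and $(N_3, m_3) \sqsupseteq_{E'} (N_2, m_2)$. The key intuition is that every model of $E \land E'$ over $P_1 \cup P_3$ should \emph{factor through} some marking $m_2^\star$ of $N_2$ that witnesses both $E$ and $E'$ simultaneously. The compatibility hypotheses ($P_1 \cap P_3 \subseteq P_2$ and disjointness of fresh variables) are exactly what is needed so that this factoring marking is well-defined and can be freely substituted in either constraint without interference.

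First I would check condition (A1). From $m_1 \uplus m_2 \models E$ and $m_2 \uplus m_3 \models E'$, the marking $m_2$ itself provides a consistent assignment for the variables of $P_2$ appearing in both constraints, while the disjointness of fresh variables between the two statements means the witnesses can be combined. Since $P_1 \cap P_3 \subseteq P_2$, the union $m_1 \uplus m_2 \uplus m_3$ is a well-defined marking, so the formula $E \land E' \land \underline{m_1} \land \underline{m_3}$ is satisfiable, giving $m_1 \uplus m_3 \models E \land E'$.

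Next I would establish (A2). Take a firing sequence $(N_1, m_1) \wtrans{\varrho_1} (N_1, m_1')$. For \emph{existence} of a compatible marking in $N_3$: applying (A2) of the first equivalence, pick any $m_2'$ with $m_1' \uplus m_2' \models E$ together with some $\varrho_2$ such that $(N_2, m_2) \wtrans{\varrho_2} (N_2, m_2')$ and $l_1(\varrho_1) = l_2(\varrho_2)$; then applying (A2) of the second equivalence to $m_2'$, pick any $m_3'$ with $m_2' \uplus m_3' \models E'$. Using $m_2'$ as a witness for the $P_2$-variables (and keeping the disjoint fresh variables of each statement independent) we get $m_1' \uplus m_3' \models E \land E'$. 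For the \emph{universal} part, take any $m_3'$ with $m_1' \uplus m_3' \models E \land E'$. By definition of $\models$, there is some assignment satisfying $E \land E' \land \underline{m_1'} \land \underline{m_3'}$; let $m_2^\star$ be its restriction to $P_2$. Then $m_1' \uplus m_2^\star \models E$ and $m_2^\star \uplus m_3' \models E'$. Applying (A2) of the first equivalence to the pair $(m_1', m_2^\star)$ yields $\varrho_2^\star$ with $(N_2, m_2) \wtrans{\varrho_2^\star} (N_2, m_2^\star)$ and $l_1(\varrho_1) = l_2(\varrho_2^\star)$, and then applying (A2) of the second equivalence to the pair $(m_2^\star, m_3')$ yields $\varrho_3$ with $(N_3, m_3) \wtrans{\varrho_3} (N_3, m_3')$ and $l_2(\varrho_2^\star) = l_3(\varrho_3)$. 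Transitivity of equality on labels gives $l_1(\varrho_1) = l_3(\varrho_3)$, as required.

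The main obstacle is bookkeeping rather than mathematical depth: one has to be careful that the intermediate marking $m_2^\star$ obtained by restricting a satisfying assignment is genuinely a marking (i.e.\ takes values in $\mathbb{N}$ on $P_2$, which is automatic since the formulas force integer values on place-indexed variables), and that reusing it as a witness in both constraints is sound. This is precisely guaranteed by the compatibility conditions on fresh variables and on $P_1 \cap P_3 \subseteq P_2$; without them, the two reduction constraints could inadvertently constrain the same symbol in incompatible ways and the factoring step would fail.
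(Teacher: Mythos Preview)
Your proof is correct and follows essentially the same route as the paper's: both arguments hinge on the factoring observation that whenever $m_1' \uplus m_3' \models E \land E'$, the compatibility hypotheses allow one to extract an intermediate marking $m_2^\star$ over $P_2$ with $m_1' \uplus m_2^\star \models E$ and $m_2^\star \uplus m_3' \models E'$, after which (A2) is chained through $N_2$. Your version is in fact more explicit than the paper's, which states this factoring step as an ``observation'' without constructing $m_2^\star$; your only slight imprecision is that QF-LIA guarantees $m_2^\star$ takes \emph{integer} values, not automatically \emph{non-negative} ones, but the paper is equally informal on that point.
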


\begin{proof}
	First, we use the fact system $E \land E'$ is solvable for $N_1, N_3$. This is
	a consequence of the compatibility assumption, since no fresh variable in $E$
	can clash with a fresh variable in $E'$. For similar reason, we have that $m_1
	\uplus m_2 \models E$ and $m_2 \uplus m_3 \models E'$ entails $m_1 \uplus m_3
	\models E \land E'$. Indeed we even have the stronger property that
	$\underline{m_1} \land \underline{m_2} \land \underline{m_3} \land  E \land
	E'$ is satisfiable. From this, we obtain condition (A1) and the first
	constraint of condition (A2).
	
	For the second constraint of condition (A2), we assume that $\sigma$ is an
	observation sequence such that $(N_1, m_1) \wtrans{\sigma} (N_1, m_1')$.
	Hence, using the fact that $(N_1, m_1) \reduc_E (N_2, m_2)$, we have $(N_2,
	m_2) \wtrans{\sigma} (N_2, m_2')$ for every marking $m_2'$ of $N_2$ such that
	$m_1' \uplus m_2' \models E$. Using a similar property from $(N_2, m_2)
	\reduc_{E'} (N_3, m_3)$, we have $(N_3, m_3) \wtrans{\sigma} (N_3, m_3')$ for
	every marking $m_3'$ of $N_3$ such that $m_2' \uplus m_3' \models E$. The
	result follows from the observation that, since $E$ and $E'$ are both solvable
	and the nets are compatible, for all markings $m_1''$ of $N_1$, if a marking
	$m_3''$ of $N_3$ satisfies $m_1'' \uplus m_3'' \models E \land E'$ then there
	must be a marking $m_2''$ of $N_2$ such that both $m_1'' \uplus m_2'' \models
	E$ and $m_2'' \uplus m_3'' \models E'$.
\end{proof}

\subsubsection{Preservation by synchronous composition}

Our next result relies on the classical synchronous product operation
between labeled Petri nets \cite{lloret_compositional_1991}. Assume
$N_1 = (P_1, T_1, \pre_1, \post_1)$ and
$N_2 = (P_2, T_2, \pre_2, \post_2)$ are two labeled
Petri nets with respective labeling functions $l_1$ and $l_2$ on the
respective alphabets $\Sigma_1$ and $\Sigma_2$. We can assume, without
loss of generality, that the sets $P_1$ and $P_2$ are disjoint.

\medskip
We introduce a new symbol, $\circ$, used to build (structured) names
for transitions that are not synchronized. The \textit{synchronous
  product} between $N_1$ and $N_2$, denoted as $N_1 \| N_2$, is the
net $(P_1 \cup P_2, T, \pre, \post)$ with labelling
function $l$ where $T$ is the smallest set containing:
\begin{itemize}
\itemsep=0.9pt
\item transition $(t, \circ)$ if $l_1(t) \not\in \Sigma_2$, such that $l((t, \circ)) = l_1(t)$;
\item transition $(\circ, t)$ if $l_2(t) \not\in \Sigma_1$, such
  that $l((\circ, t)) = l_2(t)$;
\item and transition $(t_1, t_2)$ if $l_1(t_1) = l_2(t_2)$, such that
  $l((t_1, t_2)) = l_1(t_1)$.
\end{itemize}

The flow functions of $N_1 \| N_2$ are such that
$\pre((t_1, t_2), p) = \pre_1(t_1, p)$ if $p \in P_1$
and $t_1 \neq \circ$, or $\pre_2(t_2, p)$ if $p \in P_2$ and
$t_2 \neq \circ$ (and $0$ in all the other cases). Similarly for
$\post$.\\

To simplify our proofs, we define a notion of projection over firing
sequences of $N_1 \| N_2$, that is two functions $\varrho \cdot 1$ and
$\varrho \cdot 2$ such that $\epsilon \cdot i = \epsilon$ and
$\left ( \varrho \, t \right ) \cdot i = \left ( \varrho \cdot i \right
) \, \left ( t \cdot i \right ) \,$ for all $i \in 1..2$, where
$(t_1, \circ) \cdot 1 = t$, and $(\circ, t_2) \cdot 1 = \epsilon$, and
$(t_1, t_2) \cdot 1 = t_1$ (and symmetrically with $\cdot 2$ on the
second component of each transition pair).

Projections can be used to extract from a firing sequence of
$N_1 \| N_2$, the transitions that were fired from the left
($\cdot 1$) and right ($\cdot 2$) components of the synchronous
product.

\medskip
We also need to define a dual relation, denoted
$\varrho_1 \| \varrho_2$, that defines the (potential) ``zip merge'' of
firing sequences in $T_1^\star \times T_2^\star$ into firing sequences
of $N_1 \| N_2$, when the two sequences can synchronize. When defined,
$\varrho_1 \| \varrho_2$ is the smallest set of sequences of
$N_1 \| N_2$ satisfying the following inductive rules. In particular,
we say that $\varrho_1$ and $\varrho_2$ can be synchronized when
$\varrho_1 \| \varrho_2 \neq \emptyset$.
\begin{itemize}
\item $\epsilon \| \epsilon = \{ \epsilon \}$
\item
  $(t_1 \, \varrho_1) \| \epsilon
  = \left \{ \begin{array}[c]{ll}
               \displaystyle \{ (t_1, \circ) \, \varrho \mid \varrho \in
               (\varrho_1 \| \epsilon) \} & \text {if all the transitions in
                                           $t_1 \, \varrho_1$ have labels in $\Sigma_1 \setminus
                                           \Sigma_2$,}\\
               \emptyset &
                           \text{otherwise.}\\
             \end{array} \right .$

\item
  $\epsilon \| (t_2 \, \varrho_2)
  = \left \{ \begin{array}[c]{ll} \displaystyle \{ (\circ, t_2) \,
               \varrho \mid \varrho \in (\epsilon \| \varrho_2) \}
               & \text {if all
                 the transitions in $t_2 \, \varrho_2$ have labels in $\Sigma_2
                 \setminus
                 \Sigma_1$,}\\
               \emptyset & \text{otherwise.}\\
             \end{array} \right .$

\item
  $(t_1\, \varrho_1) \| (t_2\, \varrho_2)
  = \left \{ \begin{array}[c]{ll}

               \displaystyle \{ (t_1, t_2) \,
               \varrho \mid \varrho \in (\varrho_1 \| \varrho_2) \}
               & \text {if $l_1(t_1) = l_2(t_2)$,}\\

               \{ (t_1, \circ) \, \varrho \mid  \varrho \in
              \varrho_1 \| (t_2\, \varrho_2) \}
               & \text {if $l_1(t_1) \in \Sigma_1
                 \setminus \Sigma_2$,}\\

               \{ (\circ, t_2) \, \varrho \mid  \varrho \in
               (t_1\, \varrho_1) \| \varrho_2 \}
               & \text {if $l_2(t_2) \in
                 \Sigma_2 \setminus \Sigma_1$,}\\

               \emptyset &
                           \text{otherwise.}\\
                                            \end{array} \right .$
\end{itemize}

We can also project the reachable markings of a synchronous product
over reachable markings of each of its components. Since the places in
$N_1$ and $N_2$ are disjoint, we can always see a marking $m$ in $N_1
\|
N_2$ as the disjoint union of two (necessarily compatible) markings
$m_1, m_2$ from $N_1, N_2$. In this case we simply write $m = m_1 \|
m_2$.

\medskip
More generally, we extend this product operation to marked nets and
write $(N_1, m_1) \| (N_2, m_2)$ for the marked net
$(N_1 \| N_2, m_1 \| m_2)$. The following result underscores the
equivalence between the semantics (the Labeled Transition System) of
$N_1 \| N_2$ and the product of the LTS of its components.

\begin{lemma}[Projection and product of sequences]\label{lemma:projection}
  Assume there is a firing sequence
  $(N_1 \| N_2, m_1 \| m_2) \wtrans{\varrho} (N_1 \| N_2, m_1' \|
  m_2')$ on the synchronous product $N_1 \| N_2$. Then the projections
  $\varrho \cdot 1$ and $\varrho \cdot 2$ are firing sequences of their
  respective components,
  $(N_i, m_i) \wtrans{\varrho \cdot i} (N_i, m_i')$ for all
  $i \in 1..2$, such that $\varrho \cdot 1$ and $\varrho \cdot 2$ can be
  synchronized: $\varrho \cdot 1 \| \varrho \cdot 2 \neq \emptyset$.
  Conversely, if $(N_i, m_i) \wtrans{\varrho_i} (N_i, m_i')$ for all
  $i \in 1..2$ and $\varrho \in (\varrho_1 \| \varrho_2)$ then
  $(N_1 \| N_2, m_1 \| m_2) \wtrans{\varrho} (N_1 \| N_2, m_1' \|
  m_2')$.
\end{lemma}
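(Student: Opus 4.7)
\medskip
\noindent\textbf{Proof plan for Lemma~\ref{lemma:projection}.}
The plan is to prove both implications by induction on sequence length, using the definitions of $\pre$, $\post$ for the synchronous product, together with the clauses defining projection $(\cdot\,i)$ and the merge $\|$ on sequences.

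For the forward direction, I would induct on the length of the firing sequence $\varrho$ on $N_1 \| N_2$. The base case $\varrho = \epsilon$ is immediate: projections are $\epsilon$, and $\epsilon \| \epsilon = \{\epsilon\}$, so the synchronisation set is nonempty. For the inductive step, write $\varrho = \varrho'\, t$ with $(N_1 \| N_2, m_1 \| m_2) \wtrans{\varrho'} (N_1 \| N_2, m_1'' \| m_2'') \trans{t} (N_1 \| N_2, m_1' \| m_2')$, apply the induction hypothesis to $\varrho'$, and do a case split on the three possible shapes of $t$. For $t = (t_1, \circ)$, the definition of $\pre$ and $\post$ on the product ensures that $N_2$'s marking is unchanged and that $t_1$ is enabled at $m_1''$ in $N_1$ with $m_1'' \trans{t_1} m_1'$; then $\varrho \cdot 1 = (\varrho' \cdot 1)\, t_1$ and $\varrho \cdot 2 = \varrho' \cdot 2$ give firing sequences in the components, and the clause for $(t_1 \, \varrho_1) \| \varrho_2$ (with $l_1(t_1) \notin \Sigma_2$, which is guaranteed by the construction of $T$) prepends $(t_1, \circ)$ inside $\varrho \cdot 1 \| \varrho \cdot 2$. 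The cases $t = (\circ, t_2)$ and $t = (t_1, t_2)$ are symmetric and analogous, using respectively the clause with $l_2(t_2) \notin \Sigma_1$ and the clause for $l_1(t_1) = l_2(t_2)$.

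For the converse, I would induct on $|\varrho_1| + |\varrho_2|$ (equivalently, on the structural definition of $\varrho \in \varrho_1 \| \varrho_2$). The base case $\varrho_1 = \varrho_2 = \epsilon$ forces $\varrho = \epsilon$ and is trivial. Otherwise, inspect which clause of the merge produced $\varrho$. If $\varrho_1 = t_1 \varrho_1'$ and the clause with $l_1(t_1) \in \Sigma_1 \setminus \Sigma_2$ fired, then $\varrho = (t_1, \circ)\, \varrho'$ with $\varrho' \in \varrho_1' \| \varrho_2$; the definition of $\pre$ and $\post$ on $N_1 \| N_2$ shows that $(t_1, \circ)$ is enabled at $m_1 \| m_2$ exactly when $t_1$ is enabled at $m_1$, and firing yields $m_1'' \| m_2$ where $m_1 \trans{t_1} m_1''$; applying the induction hypothesis to $\varrho_1'$, $\varrho_2$ and $\varrho'$ concludes. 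The mirror clause for $N_2$-only transitions is symmetric, and the synchronised case $\varrho = (t_1, t_2)\, \varrho'$ uses that $(t_1, t_2)$ is enabled at $m_1 \| m_2$ iff $t_1$ is enabled at $m_1$ and $t_2$ is enabled at $m_2$, by the componentwise definition of the flow functions.

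The main obstacle is bureaucratic rather than conceptual: the merge relation $\|$ on sequences has four clauses with several nested subcases, so care is needed to match each step of the induction to the exact clause of the definition that applies, and to verify the side conditions on labels (e.g.\ $l_1(t_1) \in \Sigma_1 \setminus \Sigma_2$) are forced by the fact that $T$ contains only the three flavours of product transitions. Once this bookkeeping is set up cleanly, the enabling and update conditions transfer between the product and its components directly from the definitions of $\pre$ and $\post$, so no deeper argument is required.
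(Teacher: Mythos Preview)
Your plan is sound; the paper itself does not give a proof but simply defers to Proposition~2.1 in~\cite{lloret_compositional_1991}, so there is no in-paper argument to compare against.

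One small bookkeeping point worth tightening: in the forward direction you peel off the \emph{last} transition ($\varrho = \varrho'\,t$, matching the recursive clause for projection) but then appeal to a clause of the merge $\|$ that inspects the \emph{first} transition of its arguments. As written this does not line up: from $\varrho\cdot 1 = (\varrho'\cdot 1)\,t_1$ you cannot directly invoke the rule for $(t_1\,\varrho_1)\|\varrho_2$. The fix is routine---since projection is a monoid homomorphism you may equally well decompose $\varrho = t\,\varrho'$ and run the whole forward argument from the front, which then matches the inductive structure of $\|$ exactly (and, incidentally, shows the slightly stronger fact $\varrho \in (\varrho\cdot 1)\|(\varrho\cdot 2)$). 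Your converse direction already works from the front and is fine as stated. Your overall assessment that the obstacle is bureaucratic rather than conceptual is accurate.
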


\begin{proof}
  See for instance Proposition~$2.1$
  in~\cite{lloret_compositional_1991}.
\end{proof}

We can now prove that $E$-abstraction equivalence is stable by
synchronous composition. Note that it is enough to prove the results
on $E$-abstraction, since the equivalence is symmetric.

\begin{theorem}[Composability]\label{th:composition}
  Assume $(N_1, m_1) \reduc_E (N_2, m_2)$ and that $(M, m)$ is
  {compatible} with respect to this equivalence, then
  $(N_1, m_1) \| (M, m) \reduc_E (N_2, m_2) \| (M, m)$.
\end{theorem}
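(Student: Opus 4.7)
The plan is to use the symmetry of $\reduc_E$ to reduce the goal to the one-directional abstraction $(N_1, m_1) \| (M, m) \sqsupseteq_E (N_2, m_2) \| (M, m)$, then apply the same argument with $N_1$ and $N_2$ swapped. I would first observe that, because $(M, m)$ is compatible with the equivalence (disjoint places, no shared fresh variables), the constraint $(m_1' \| m') \uplus (m_2' \| m'') \models E$ on product markings factors cleanly as $m_1' \uplus m_2' \models E$ together with $m' = m''$. This immediately settles condition (A1): $m_1 \uplus m_2 \models E$ holds by hypothesis, and the $M$-component is trivially compatible with itself.

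\textbf{Handling (A2).} From a firing sequence $(N_1 \| M, m_1 \| m) \wtrans{\varrho} (N_1 \| M, m_1' \| m')$, I apply the projection direction of Lemma~\ref{lemma:projection} to extract firing sequences $(N_1, m_1) \wtrans{\varrho \cdot 1} (N_1, m_1')$ and $(M, m) \wtrans{\varrho \cdot 2} (M, m')$ with $\varrho \in (\varrho \cdot 1) \| (\varrho \cdot 2) \neq \emptyset$. The hypothesis $(N_1, m_1) \reduc_E (N_2, m_2)$ then delivers, via (A2), (i) the existence of at least one $m_2'$ with $m_1' \uplus m_2' \models E$ and (ii) for every such $m_2'$, a firing sequence $(N_2, m_2) \wtrans{\varrho_2} (N_2, m_2')$ satisfying $l_1(\varrho \cdot 1) = l_2(\varrho_2)$. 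Paired with $m'$ on the $M$-side, this yields a candidate target marking $m_2' \| m'$ for the desired firing sequence in $N_2 \| M$.

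\textbf{The main obstacle} is showing that $\varrho_2$ and $\varrho \cdot 2$ can be re-synchronized into a firing sequence of $N_2 \| M$, i.e.\ that $\varrho_2 \| (\varrho \cdot 2) \neq \emptyset$. The natural route is an auxiliary lemma stating that non-emptiness of $\varrho \| \varrho'$ depends only on the observation sequences $l(\varrho)$ and $l(\varrho')$, since the synchronization rules treat silent transitions as local interleavings of the form $(t, \circ)$ or $(\circ, t)$ and force matching only on the shared observable labels. A straightforward induction on the defining rules of $\|$ should then give, from $(\varrho \cdot 1) \| (\varrho \cdot 2) \neq \emptyset$ and $l_1(\varrho \cdot 1) = l_2(\varrho_2)$, the desired $\varrho_2 \| (\varrho \cdot 2) \neq \emptyset$. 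Picking any $\sigma$ in this set and applying the converse direction of Lemma~\ref{lemma:projection} produces the firing sequence $(N_2 \| M, m_2 \| m) \wtrans{\sigma} (N_2 \| M, m_2' \| m')$; the equality of observation sequences $l(\sigma) = l(\varrho)$ follows from the component-wise definition of the labelling of synchronous products together with $l_1(\varrho \cdot 1) = l_2(\varrho_2)$ and the shared $\varrho \cdot 2$. The reverse abstraction $(N_2 \| M, m_2 \| m) \sqsupseteq_E (N_1 \| M, m_1 \| m)$ is obtained by the same reasoning with the roles of $N_1$ and $N_2$ exchanged, completing the proof.
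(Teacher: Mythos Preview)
Your proposal is correct and follows essentially the same route as the paper's proof: project the firing sequence via Lemma~\ref{lemma:projection}, invoke (A2) on the $N_1$--$N_2$ abstraction to obtain $\varrho_2$ with $l_1(\varrho\cdot 1)=l_2(\varrho_2)$, re-synchronize $\varrho_2$ with $\varrho\cdot 2$, and apply the converse direction of Lemma~\ref{lemma:projection}. The only noteworthy difference is that the paper simply asserts $\varrho_2 \| (\varrho\cdot 2)\neq\emptyset$ from the equality of observation sequences, whereas you explicitly isolate this as the main obstacle and propose the natural auxiliary lemma (synchronizability depends only on the observable labels); your treatment is slightly more careful on this point but otherwise identical in structure.
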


\begin{proof}
  By hypothesis system $E$ is solvable for $N_1, N_2$. Hence, since
  $M$ is compatible, no place in the net $M$ can occur in one of the
  constraints of $E$. Therefore $E$ is also solvable for the pair of
  nets $(N_1 \| M)$ and $(N_2 \| M)$.  Likewise, the initial markings
  $(m_1 \| m)$ and $(m_2 \| m)$ are compatible together and
  $(m_1 \| m) \uplus (m_2 \| m) \models E$ (the constraints in $m$
  have no effect on the constraints of $E$). Therefore condition (A1) is
  valid for the marked nets $(N_1, m_1) \| (M, m)$ and
  $(N_2, m_2) \| (M, m)$, and we obtain the first constraint of condition (A2).
	
  We are left with proving the second constraint of condition (A2).
  Assume we have a firing sequence $\varrho$ in $N_1 \| M$. By our
  projection property (Lemma~\ref{lemma:projection}) it must be the
  case that
  $(N_1 \| M, m_1 \| m) \wtrans{\varrho} (N_1 \| M, m_1' \| m')$ with
  $(N_1, m_1) \wtrans{\varrho \cdot 1} (N_1, m_1')$. We also have that
  $(M, m) \wtrans{\varrho \cdot 2} (M, m')$ such that
  $(\varrho \cdot 1) \| (\varrho \cdot 2) \neq \emptyset$.
	
  By condition (A2) on the abstraction between $N_1$ and $N_2$, it
  must be the case that $(N_2, m_2) \wtrans{\varrho_2} (N_2, m_2')$,
  for some firing sequence $\varrho_2$ of $N_2$, for all markings $m_2'$ of
  $N_2$ such that $m_1'\uplus m_2'\models E$. Moreover the observable
  sequence obtained from $\varrho_2$ and $\varrho \cdot 1$ are the same:
  $l_1(\varrho \cdot 1) = l_2(\varrho_2)\ (\star)$, which means also
  that $(\varrho_2) \| (\varrho \cdot 2) \neq \emptyset$. Hence,
  using the second direction in Lemma~\ref{lemma:projection}, we can
  find a firing sequence in $\varrho_2 \| (\varrho \cdot 2)$, say
  $\varrho'$, such that
  $(N_2 \| N_3, m_2 \| m_3) \wtrans{\varrho'} (N_2 \| M, m_2' \|
  m')$. Like in the proof of condition (A1), we obtain that
  $(m_1' \| m') \uplus (m_2' \| m') \models E$ from the fact that
  $m_1'\uplus m_2'\models E$, and $E$ is solvable, and $M$ is
  compatible.
	
  We are left to prove that $\varrho$ and $\varrho'$ have the same
  observation sequences. This is a consequence of the fact that
  $l_1(\varrho \cdot 1) = l_2(\varrho_2)$ (property $\star$ above); and
  the fact that, by construction of $\varrho'$, we have
  $\varrho' \cdot 1 = \varrho_2$ and $\varrho' \cdot 2 = \varrho \cdot 2$.
\end{proof}

\subsubsection{Preservation by relabeling}

Another standard operation on labeled Petri nets is \emph{relabeling},
denoted as $N[a/b]$, that apply a substitution to the labeling
function of a net. Assume $l$ is the labeling function over the
alphabet $\Sigma$. We denote $l[a/b]$ the labeling function on
$(\Sigma \setminus \{ a \}) \cup \{ b\}$ such that $l[a/b](t) = b$
when $l(t) = a$ and $l[a/b](t) = l(t)$ otherwise. Then $N[a/b]$ is the
same as net $N$ but equipped with labeling function
$l[a/b]$. Relabeling has no effect on the marking of a net.  The
relabeling law is true even in the case where $b$ is the silent action
$\tau$. In this case we say that we \emph{hide} action $a$ from the
net.

We prove that $E$-abstraction equivalence is also preserved by
relabeling and hiding.

\begin{theorem}\label{th:relabeling}
  If $(N_1, m_1) \reduc_E (N_2, m_2)$ then
  $(N_1[a/b], m_1) \vartriangleright_{E} (N_2[a/b], m_2)$.
\end{theorem}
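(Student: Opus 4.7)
The plan is to exploit the fact that relabeling changes only the labeling function of a net, not its places, transitions, flow functions, or initial marking. In particular, $N$ and $N[a/b]$ have exactly the same firing sequences $\varrho$; only the map $\varrho \mapsto l(\varrho)$ from transition sequences to observation sequences differs. Consequently, condition (A1) for $(N_1[a/b], m_1)$ and $(N_2[a/b], m_2)$ is inherited verbatim from the original equivalence: the constraint $E$ still involves only the (unchanged) places, and the initial markings are unchanged, so $m_1 \uplus m_2 \models E$ still holds.

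For condition (A2), I would take any firing sequence $(N_1[a/b], m_1) \wtrans{\varrho_1} (N_1[a/b], m_1')$; since transitions and flow functions are unchanged, this is exactly the same as a firing sequence $(N_1, m_1) \wtrans{\varrho_1} (N_1, m_1')$ in $N_1$. By the hypothesis $(N_1, m_1) \reduc_E (N_2, m_2)$, there is some $m_2'$ with $m_1' \uplus m_2' \models E$, and for every such $m_2'$ there exists a firing sequence $\varrho_2$ in $N_2$ with $(N_2, m_2) \wtrans{\varrho_2} (N_2, m_2')$ and $l_1(\varrho_1) = l_2(\varrho_2)$. The same $\varrho_2$ is a firing sequence of $N_2[a/b]$, so it remains only to show $l_1[a/b](\varrho_1) = l_2[a/b](\varrho_2)$.

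The key observation, which I would state as a short lemma proved by induction on the length of $\varrho$, is that for any net $N$ with labeling $l$, the observation $l[a/b](\varrho)$ depends only on the observation $l(\varrho)$ via a fixed, net-independent rewriting $h$ on $\Sigma^\star$: when $b \in \Sigma$, $h$ replaces every occurrence of $a$ with $b$; when $b = \tau$, $h$ erases every occurrence of $a$. This is immediate from the pointwise definition of $l[a/b]$ and the convention $l(t) = \tau \mapsto \epsilon$ in the extension to sequences. Applying $h$ to the equal sequences $l_1(\varrho_1)$ and $l_2(\varrho_2)$ then gives $l_1[a/b](\varrho_1) = l_2[a/b](\varrho_2)$, which is what we need.

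The reverse direction, $(N_2[a/b], m_2) \sqsupseteq_E (N_1[a/b], m_1)$, is entirely symmetric, using the other half of the hypothesis $(N_1, m_1) \reduc_E (N_2, m_2)$. I do not expect any real obstacle here: the entire argument factors through the fact that relabeling commutes with observation in a uniform, net-independent way. The only step requiring a moment of care is the hiding case $b = \tau$, but that case too reduces to the trivial observation that deleting all $a$'s from two equal strings yields two equal strings.
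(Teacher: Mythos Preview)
Your proposal is correct and follows essentially the same approach as the paper's proof. The paper's argument is extremely terse---it simply notes that condition (A1) is label-independent and that $l_1(\varrho_1) = l_2(\varrho_2)$ implies $l_1[a/b](\varrho_1) = l_2[a/b](\varrho_2)$---while you spell out this last implication more carefully via the net-independent rewriting $h$; but the underlying idea is identical.
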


\begin{proof}
	Assume $(N_1, m_1) \reduc_E (N_2, m_2)$. Condition (A1)
	does not depend on the labels and therefore it is also true between
	$N_1[a/b], E$ and $N_2[a/b]$. For condition (A2), we simply use the fact that for any firing sequences $\varrho_1$ and $\varrho_2$, $l_1(\varrho_1) = l_2(\varrho_2)$ implies $l_1[a/b](\varrho_1) = l_2[a/b](\varrho_2)$.
\end{proof}

\subsection{Reduction rules}

We define a simplified set of relations that can act as ``axioms'' in
a system for deriving $E$-abstraction equivalences. Each of these
axioms derives from a standard \emph{structural reduction rule} (see
e.g~\cite{berthelot_transformations_1987,berthomieu_counting_2019}),
where labeled transitions play the role of interfaces with a possible
outside ``context''.

Each rule is defined by a triplet $(N_1, E, N_2)$ such that
$(N_1, m_1) \reduc_E (N_2, m_2)$. A rule also defines possible values
for the initial markings, which can be expressed using integer
parameters, and may also include a condition that should be true
initially.

Each of our rules corresponds to instances of the reduction system that
was defined in our previous work on ``counting reachable
markings''~\cite{berthomieu_counting_2019} (we give a precise
reference in each case). Hence they also correspond to instances of
reduction rules implemented in our tool, called \Reduce, that can
automatically find occurrences of reductions in Petri nets and apply
them recursively. We give more information about this tool and the
relation with our approach in Sect.~\ref{sec:deriv-e-equiv}. This
section also contains an example showing how to apply our reduction
rules to derive the equivalence stated  in Fig.~\ref{fig:stahl}.

We consider four general families of reductions: first rules for
agglomerating places (like [CONCAT] and [AGG]); then rules based on a
``place invariant'' over the initial net (what we call a
\emph{redundancy rule} like [RED] and [SHORTCUT]); rules for garbage collecting dead places or
transitions (like [DEADT] and [REDT]); and finally rules that can be used to abstract constant
or ``closed'' places (like [CONSTANT] and [SOURCE]).

 \medskip
We give a detailed proof of correctness for our first ``reduction
axiom'', rule [CONCAT], since it is representative of the complexity
of checking simple instances of $E$-abstraction equivalence. We do not
prove similar results for all the rules defined in this section but
will only give one other example, for the redundancy rule [RED]. All
the correctness proofs for the reduction rules given in this section
are very similar to one of these two examples.

\subsubsection{Rule [CONCAT]}
Our first example is the prototypical example of net reduction, as
defined in~\cite{berthelot_transformations_1987}. It also corresponds
to the simplest example of agglomeration rule (see the rule for
chaining in Fig.~$6$ of~\cite{berthomieu_counting_2019}).

\medskip
Rule [CONCAT], Fig.~\ref{fig:concat}, can be used to fuse together two
places ``connected only through a deterministic transition'' (modeled
as a silent transition in our approach). The constraints imposed for
applying this rule is that place $y_2$, in the initial net, must be
initially empty. We also have the condition that no transition with
an observable label (hence no transition that can potentially be
merged with an outside context with a synchronous composition) can add
a token directly to place $y_2$. This condition is necessary to ensure
the correctness of this rule, see Proposition~\ref{prop:rule-concat}.

\begin{figure}[tb]
  \centering
  \includegraphics{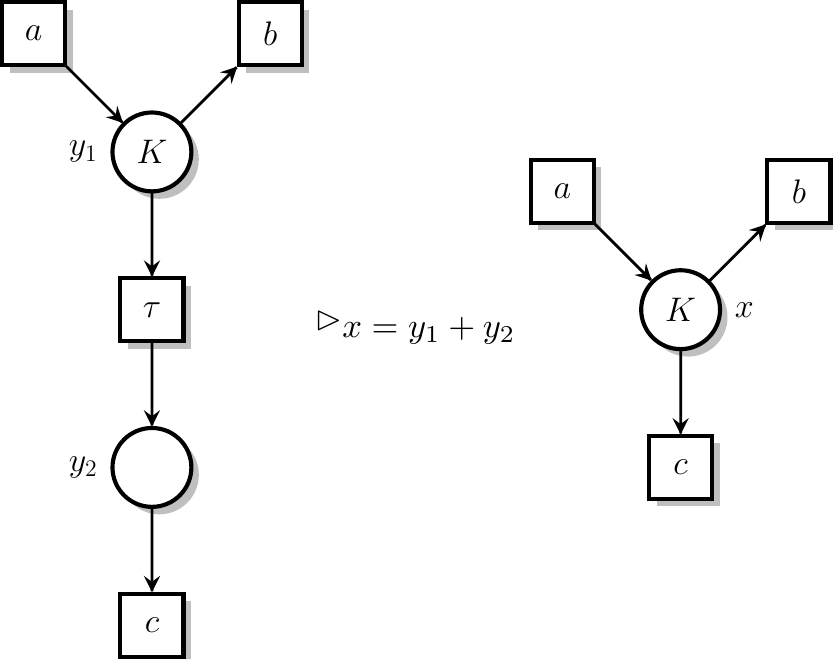}
  \caption{Rule [CONCAT].\label{fig:concat}}\vspace*{-3mm}
\end{figure}

\medskip
Note that nets $N_1$ and $N_2$ are not bounded, since transition $a$
can always be fired to increase the marking of places $y_1$ and
$x$. Which means that we need to consider an unbounded number of
firing sequences. Therefore it may not be possible to prove this
result using an automatic verification method, such as
model checking. Actually, we are working on the definition of an
automatic proof method for certifying the correctness of reduction
rules, which would be an interesting addition to our approach.

\begin{proposition}[Correctness of rule {[CONCAT]}]\label{prop:rule-concat}
  We have $(N_1, m_1) \reduc_E (N_2, m_2)$, with $E$ the system
  containing the single equation $x = y_1 + y_2$, and $N_1, N_2$ the
  nets depicted in Fig.~\ref{fig:concat}.
 \end{proposition}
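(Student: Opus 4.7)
The plan is to verify condition (A1) by direct computation and then establish condition (A2) in both directions by induction on firing sequences, using the key invariant that $x = y_1 + y_2$ is preserved by every transition of $N_1$. The silent transition $t$ merely moves a token from $y_1$ to $y_2$, and every other transition of $N_1$ has, by construction, a counterpart in $N_2$ whose net effect on $x$ equals its net effect on $y_1 + y_2$; the hypothesis that no observable transition deposits tokens in $y_2$ is precisely what makes this correspondence well-defined. Condition (A1) then reduces to $m_2(x) = m_1(y_1) + m_1(y_2) = m_1(y_1)$, which holds since $y_2$ is initially empty, and solvability of $E$ is an immediate consequence of the preserved invariant.

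For the $N_1 \to N_2$ direction of (A2), given $(N_1, m_1) \wtrans{\varrho_1} (N_1, m_1')$, I construct $\varrho_2$ by deleting every occurrence of $t$ from $\varrho_1$ and replacing each remaining transition by its $N_2$-counterpart. A straightforward induction on $|\varrho_1|$ shows that $\varrho_2$ is fireable from $m_2$ (the precondition $k_1 + k_2$ on $x$ is met whenever the preconditions $k_1$ on $y_1$ and $k_2$ on $y_2$ are met), and reaches a marking $m_2'$ with $m_1' \uplus m_2' \models E$; label equality is immediate because only silent steps are dropped. Since the invariant forces $m_2'(x)$ to be uniquely determined by $m_1'$, no further witness $m_2'$ needs to be considered.

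For the reverse direction, given $(N_2, m_2) \wtrans{\varrho_2} (N_2, m_2')$ and an arbitrary $m_1'$ with $m_1'(y_1) + m_1'(y_2) = m_2'(x)$, I build $\varrho_1$ in two phases. First, a lazy simulation: I process $\varrho_2$ transition by transition, firing the $N_1$-counterpart of each one and, just before any transition that needs to consume from $y_2$, inserting exactly the number of $t$-firings required to enable its precondition on $y_2$, drawn from the tokens currently in $y_1$. This phase ends at an intermediate marking $m_1''$ with $m_1''(y_1) + m_1''(y_2) = m_2'(x)$ and with tokens kept maximally in $y_1$. Second, an adjustment: I append exactly $m_1'(y_2) - m_1''(y_2)$ extra firings of $t$ to reach the desired decomposition $m_1'$. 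Since $t$ is silent, the observable trace coincides with $l_2(\varrho_2)$.

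The main obstacle is justifying that the lazy simulation is always enabled and that the adjustment is always feasible. Formally, one shows by induction on prefixes of $\varrho_2$ that the number of tokens deposited in $y_2$ during the simulation equals the number of $t$-firings performed, so any transition whose counterpart in $\varrho_2$ requires $k_2$ tokens from $y_2$ can be served by inserting at most $k_2 - m_1''(y_2)$ additional firings of $t$; these are feasible because the balance on $y_1 + y_2$ in $N_1$ tracks the nonnegative balance on $x$ in $N_2$, using the rule's hypotheses that $y_2$ starts empty and that only $t$ writes into it. Feasibility of the adjustment then follows automatically since the lazy strategy leaves $y_1$ as full as possible, ensuring $m_1''(y_1) \ge m_1'(y_2) - m_1''(y_2)$.
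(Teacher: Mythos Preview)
Your proof is correct and follows essentially the same strategy as the paper: erase the silent transition to go from $N_1$ to $N_2$, and insert silent steps just before each consumption from $y_2$ to go from $N_2$ to $N_1$. Your treatment of the $N_2 \to N_1$ direction is in fact more complete than the paper's: the paper only exhibits the firing sequence reaching the particular marking $m_1'$ with $m_1'(y_2)=0$, whereas condition~(A2) requires covering \emph{every} $m_1'$ with $m_1'(y_1)+m_1'(y_2)=m_2'(x)$; your explicit adjustment phase (appending further $\tau$-firings) supplies exactly the step the paper leaves implicit.
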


\eject
\begin{proof}
  The constraints on the initial marking of the nets are such that
  $m_1(y_1) = m_2(x) = K \geq 0$ and $m_1(y_2) = 0$. To ease the
  presentation, we should use $\tau, a, b, c$ as the name of the
  transitions, and not only as labels.

  We start by proving condition (A1) and the first constraint of condition (A2).
  By construction, we have $m_1 \uplus m_2 \models E$ and $E$ is solvable for
  $N_1, N_2$. Indeed, equation $x = y_1 + y_2$ is always satisfiable when we fix
  either the values of variables $y_1, y_2$, or the value of $x$.

  We now prove the second constraint of condition (A2) for the relation
  $(N_1, m_1) \sqsupseteq_E (N_2, m_2)$. Assume that
  $(N_1, m_1) \wtrans{\varrho_1} (N_1, m_1')$ and that
  $m_1' \uplus m_2' \models E$. By definition of $E$, when $m_1'$ is
  fixed, there is a unique solution for $m_2'$ such that
  $m_1' \uplus m_2' \models E$; which is
  $m_2'(x) = m_1'(y_1) + m_1'(y_2)$. Take $\varrho_2$ the unique firing
  sequence of $N_2$ such that  $l_1(\varrho_1) = l_2(\varrho_2)$
  (basically $\varrho_2$ is obtained from $\varrho_1$ by erasing all
  occurrences of the silent transition). It is the case that
  $(N_2, m_2) \wtrans{\varrho_2} (N_2, m_2')$, as needed.

  We are left to prove the same constraint for the relation
  $(N_2, m_2) \sqsupseteq_E (N_1, m_1)$. Assume we have
  $(N_2, m_2) \wtrans{\varrho_2} (N_2, m_2')$. We prove that there is a
  firing sequence $\varrho_1$ such that
  $(N_1, m_1) \wtrans{\varrho_1} (N_1, m_1')$ and
  $l_1(\varrho_1) = l_2(\varrho_2)$, where $m_1'$ is the marking defined
  by $m_1'(y_1) = m_2'(x)$ and $m_1'(y_2) = 0$ (all the tokens are in
  $y_1$). We define $\varrho_1$ as the (unique) sequence obtained from
  $\varrho_2$ by adding one occurrence of the $\tau$-transition before
  each occurrence of $c$ in $\varrho_2$. Intuitively, we always keep
  all the tokens in place $y_1$ of $N_1$, except before firing a $c$;
  in which case we add a token to place $y_2$. We can prove, using an
  induction on the size of $\varrho_2$, that $\varrho_1$ is a legitimate
  firing sequence of $(N_1, m_1)$ and that
  $(N_1, m_1) \wtrans{\varrho_1} (N_1, m_1')$.
\end{proof}

\subsubsection{Rule [AGG]}

Our second example of rule is for the \emph{agglomeration of places}, see
Fig.~\ref{fig:agg}, that can be used to simplify a ``cluster of places'' between
which tokens can move freely from $y_2$ to $y_1$. This is an instance of the
general ``loop agglomeration'' rule given in Fig.~$7$
of~\cite{berthomieu_counting_2019}.

We could easily define a family of reduction rules similar to [AGG]
and [CONCAT] but for longer ``loops'' or ``chains'' of places, or with
the addition of weights on the arcs. For the sake of brevity, we only
list one archetypal instance of each rule in this section.

\begin{figure}[tb]
  \centering \includegraphics{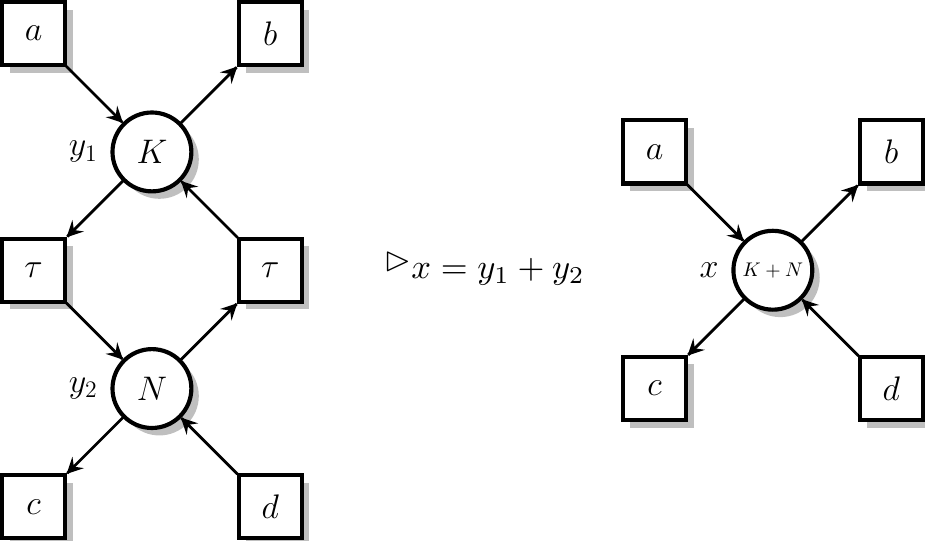}
  \caption{Rule [AGG].\label{fig:agg}}\vspace*{-3mm}
\end{figure}

\subsubsection{Rules [RED] and [SHORTCUT]}

Our next two rules, Fig.~\ref{fig:red}, are reductions that can be used to
eliminate \emph{redundant places}, meaning places whose marking derives from a
place invariant (and the knowledge of the marking of other places).
\begin{figure}[!h]
  \centering \includegraphics{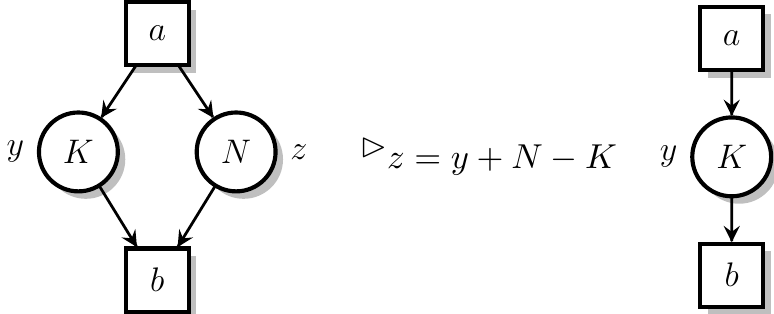}

  \vspace{1ex} \hrulefill

  \vspace{3ex}

  \includegraphics[width=0.95\textwidth]{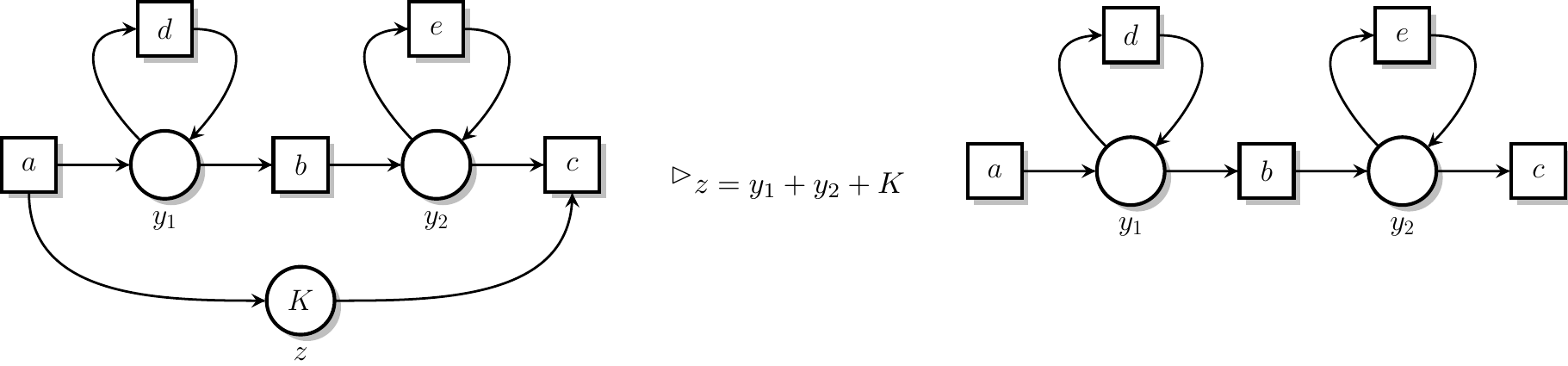}

  \caption{Rule [RED] (above), assuming $K \leqslant N$, and rule
    [SHORTCUT] (below).\label{fig:red}}\vspace*{-3mm}
\end{figure}

In rule [RED] for instance, with the assumption that we have more
tokens in place $z$ than in $y$ initially, it is always the case that
$m(z) - m(y)$ is a constant for all the reachable states $m$. Hence we
can safely eliminate $z$ and keep the relevant information in our
linear system $E$.

Rule [SHORTCUT] gives a more involved example, that relies on a
condition involving more than two places; an invariant of the form $z =
  y_1 + y_2 + K$.

We give the proof of correctness for the equivalence corresponding to
rule [RED]. The proofs for other redundant place elimination rules are
all similar.

\begin{proposition}[Correctness of rule {[RED]}]\label{prop:rule-red}
  Assuming $K \leqslant N$, we have $(N_1, m_1) \reduc_E (N_2, m_2)$,
  with $E$ the system containing the single equation $x = y + N - K$,
  and $N_1, N_2$ the nets depicted in Fig.~\ref{fig:red} (above).
\end{proposition}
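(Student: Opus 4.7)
The plan is to follow the template of Proposition~\ref{prop:rule-concat}, exploiting the place invariant that motivates the rule. First I would fix notation for the nets pictured in Fig.~\ref{fig:red} (top): $N_1$ has a ``redundant'' place $x$ with $m_1(x) = N$ and a place $y$ with $m_1(y) = K$, together with transitions connecting them (plus possibly interface labels), while $N_2$ is obtained by deleting $x$ and the arcs incident to it, with $m_2(y) = K$; the net is shaped so that every transition modifies $x$ and $y$ by the same signed amount. This is precisely what makes $x - y$ a place invariant of $N_1$, with constant value $N - K \geq 0$ by hypothesis, and this is exactly what the single equation $x = y + N - K$ of $E$ encodes.

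Next I would dispatch condition (A1) and the solvability of $E$. (A1) is immediate, since $m_1(x) = N$, $m_1(y) = m_2(y) = K$, and $N = K + (N-K)$. Solvability in the direction $N_1 \to N_2$ is immediate: any marking $m_1$ yields a compatible $m_2$ agreeing with $m_1$ on all shared places. In the other direction, given $m_2 \in R(N_2,m_2)$, the value $m_2(y) + N - K$ is non-negative (here the assumption $K \leq N$ is used together with $m_2(y) \geq 0$), hence a legitimate marking for $x$.

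Then I would prove the second clause of (A2) in both directions. For $(N_1, m_1) \sqsupseteq_E (N_2, m_2)$, given $(N_1, m_1) \wtrans{\varrho_1} (N_1, m_1')$ and a compatible $m_2'$ with $m_1' \uplus m_2' \models E$, I let $\varrho_2$ be the sequence in $N_2$ obtained by mapping every transition of $N_1$ to its counterpart in $N_2$; each such transition has strictly weaker preconditions in $N_2$ (no demand on $x$), so $\varrho_2$ is firable from $m_2$, the reached marking agrees with $m_1'$ on $y$, and labels match by construction. Conversely, given $(N_2, m_2) \wtrans{\varrho_2} (N_2, m_2')$, I take $\varrho_1$ to be the sequence in $N_1$ corresponding transition-wise to $\varrho_2$ and prove by induction on the length of $\varrho_2$ that $\varrho_1$ is firable from $m_1$ while maintaining the invariant $m(x) = m(y) + N - K$; at each step, the precondition on $x$ required by the lifted transition is, by the structural shape of the rule, bounded above by the precondition on $y$ plus the invariant offset $N - K$, which is met because the transition was enabled in $N_2$. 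The final marking satisfies $m_1'(x) = m_2'(y) + N - K$, so $m_1' \uplus m_2' \models E$.

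The main obstacle is the second direction: checking that the reconstructed sequence $\varrho_1$ never makes $m(x)$ negative. This is exactly where the hypothesis $K \leq N$ enters essentially---it guarantees that the invariant offset $N - K$ is non-negative, so that the lifted precondition on $x$ can always be met once the precondition on $y$ is met in $N_2$. Everything else is routine bookkeeping of the form already carried out in the proof of rule [CONCAT].
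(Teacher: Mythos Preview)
Your proposal is correct and follows essentially the same approach as the paper: both arguments hinge on the place invariant $x = y + (N - K)$ and reuse the very same firing sequence $\varrho$ in each direction (the paper's proof even writes ``$\varrho$ is also a firing sequence of $(N_2, m_2)$'' and then ``the proof is similar in the other direction''). Your write-up is simply more explicit about the reverse direction and about where the hypothesis $K \leq N$ is actually consumed, which the paper leaves implicit.
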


\begin{proof}
  Condition $K \leqslant N$ is necessary in order to have that
  $N - K \geqslant 0$, and therefore that the marking of $y$ in $N_2$
  is indeed non-negative.

  By construction, we have $m_1 \uplus m_2 \models E$, satisfying condition
  (A1). The first constraint of condition (A2) follows from the fact that $z =
  y + N - K$ is an invariant on $(N_1, m_1)$, meaning that for all firing
  sequence $\varrho$ such that $(N_1, m_1) \wtrans{\varrho} (N_1, m_1')$ we have
  $m_1'(z) = m_1'(y) + N - K$. This can be proved by a simple induction on the
  length of $\varrho$. Hence, $E$ is satisfied for every reachable marking in
  $(N_1, m_1)$, and so $E$ is solvable.

  We now prove the second constraint of condition (A2) for the relation $(N_1,
  m_1) \sqsupseteq_E (N_2, m_2)$. Assume that $(N_1, m_1) \wtrans{\varrho} (N_1,
  m_1')$. We have that $\varrho$ is also a firing sequence of $(N_2, m_2)$ and,
  moreover, $(N_2, m_2) \wtrans{\varrho} (N_2, m_2')$ such that $m_2'(y) =
  m_1'(y)$. The proof is similar in the other direction.
\end{proof}

\subsubsection{Rules [REDT] and [DEADT]}

We can use the same approach to simplify transitions in a net, rather than
places. One such example is rule [REDT], to remove redundant transitions. Such
rules are interesting because, when applied in collaboration with others, they
can create new opportunities to apply reductions. We give an example of such
mechanism in the example of Sect.~\ref{sec:deriv-e-equiv}.

Another example is the elimination of dead transitions, rule [DEADT],
that can get rid of transitions that are ``structurally dead''. In
this example, we know that place $x$ will always stay empty since no
transition can increase its marking. Hence the $\tau$ transition is
dead and we can remove it without modifying the set of reachable
markings nor the observation sequences.

\begin{figure}[tb]
  \centering \includegraphics{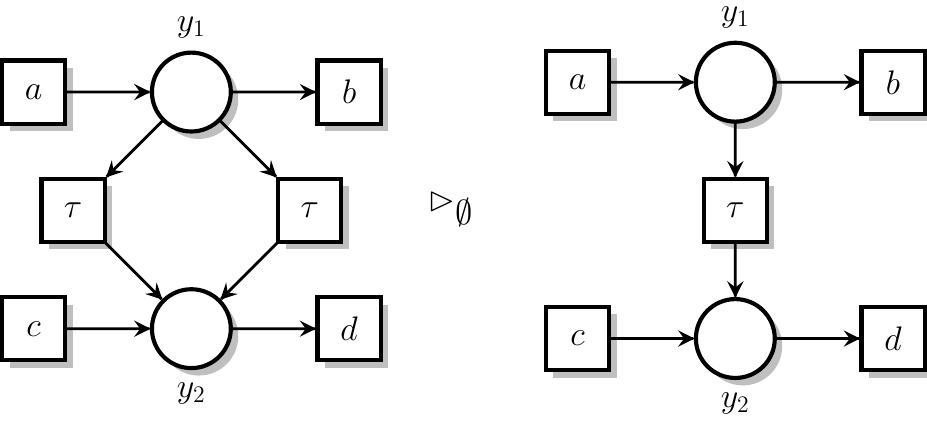}

  \vspace{1ex} \hrulefill

  \vspace{3ex}

  \includegraphics{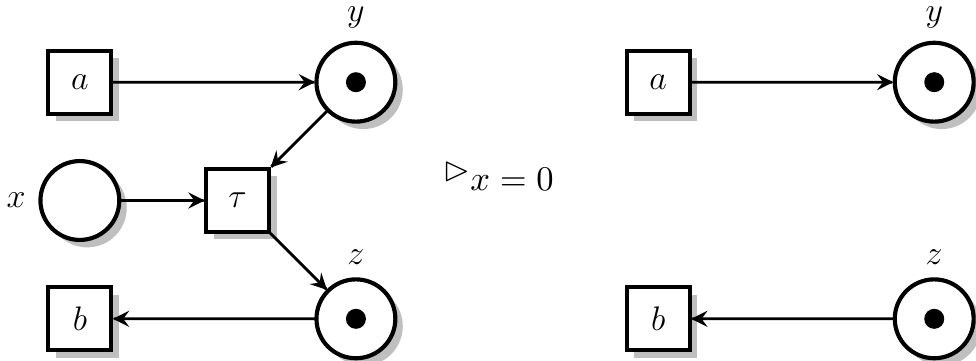}

  \caption{Rules [REDT] (above), and
    [DEADT] (below).\label{fig:redt}}
\end{figure}

\subsubsection{Rules [CONSTANT] and [SOURCE]}

Our last examples of rules illustrate the case of equivalences $(N_1, m_1)
\reduc_{E} (N_2, m_2)$ where the final Petri net is ``empty'' (denoted
$\emptyset$). A Petri net with an empty set of places has only one marking; the
empty mapping, the only function with domain $\emptyset \to \Nat$.

In this case the reachable markings of $(N_1, m_1)$ are exactly
defined by the non-negative solutions of system $E$.

Such cases may occur in practice when we can apply several reductions
in a row. We say that the initial net is ``fully reducible''. In
example [SOURCE] for instance, we can abstract the state space of the
initial net with the single constraint $x \leqslant K$.

\begin{figure}[tb]
\vspace*{3mm}
  \hfill \includegraphics{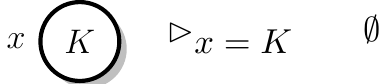}
  \hfill
  \includegraphics{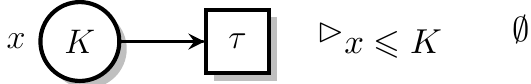}
  \hfill \hspace*{0em}
  \caption{Rules [CONSTANT] (left)  and
    [SOURCE] (right).\label{fig:constant}}
\end{figure}

We have other rules that allow us to fully reduce a net. For instance
specific structural or behavioural restrictions, such as nets that are
marked graphs or other cases where the set of reachable markings is
exactly defined by the solutions of the state
equation~\cite{hujsa2020checking}.

\subsection{Deriving $E$-equivalences using reductions}
\label{sec:deriv-e-equiv}

We can compute net reductions by reusing a tool, called \Reduce, that
was developed in our previous work~\cite{berthomieu_counting_2019}
(see also Sect.~\ref{sec:experimental-results}). The tool takes a
marked Petri net as input and returns a reduced net and a sequence of
linear constraints. For example, given the net $M_1$ of
Fig.~\ref{fig:stahl}, \Reduce returns net $M_2$ and equations
$(p_5 = p_4), (a_1 = p_1 + p_2), (a_2 = p_3 + p_4)$, and
$(a_1 = a_2)$, that corresponds to formula $E_M$ in
Fig.~\ref{fig:stahl}.

\begin{figure}[!ht]
\vspace*{4mm}
  \centering
  \includegraphics[height=9.2em]{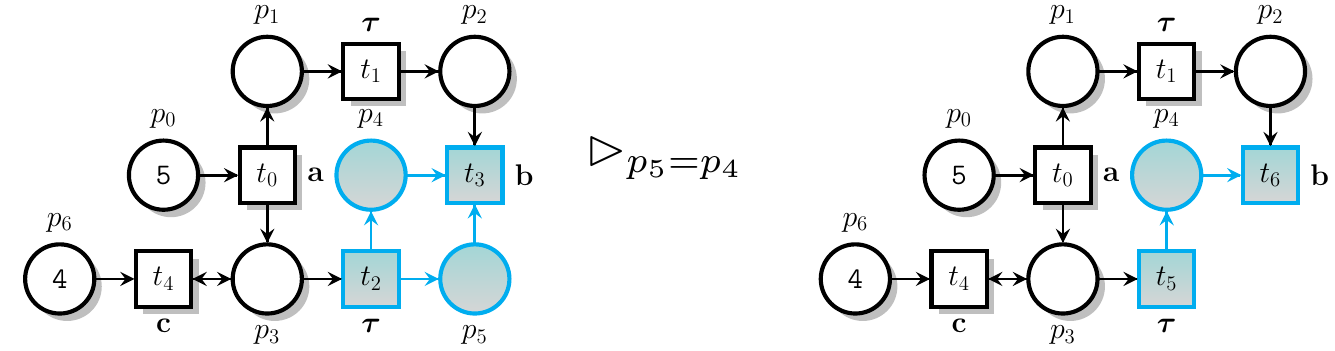}
   \vskip 0.5em
   % \hrulefill
   % \vskip 0.5em
  \includegraphics[height=9.2em]{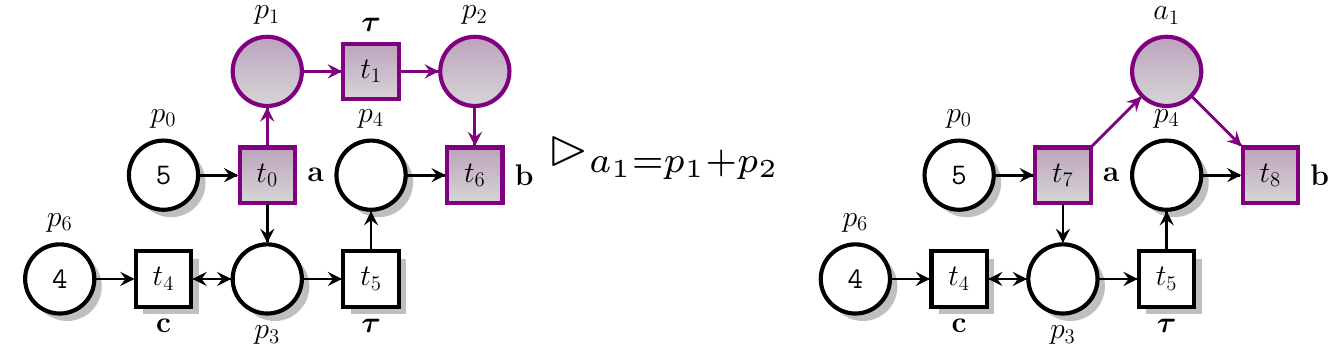}
  \vskip 0.5em
   % \hrulefill
   % \vskip 0.5em
  \includegraphics[height=9.2em]{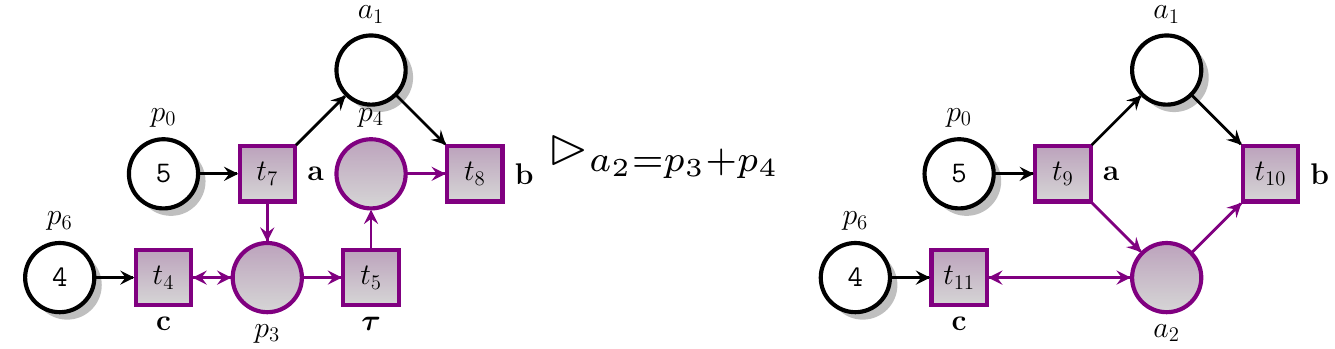}
  \vskip 0.5em
   % \hrulefill
   % \vskip 0.5em
  \includegraphics[height=9.2em]{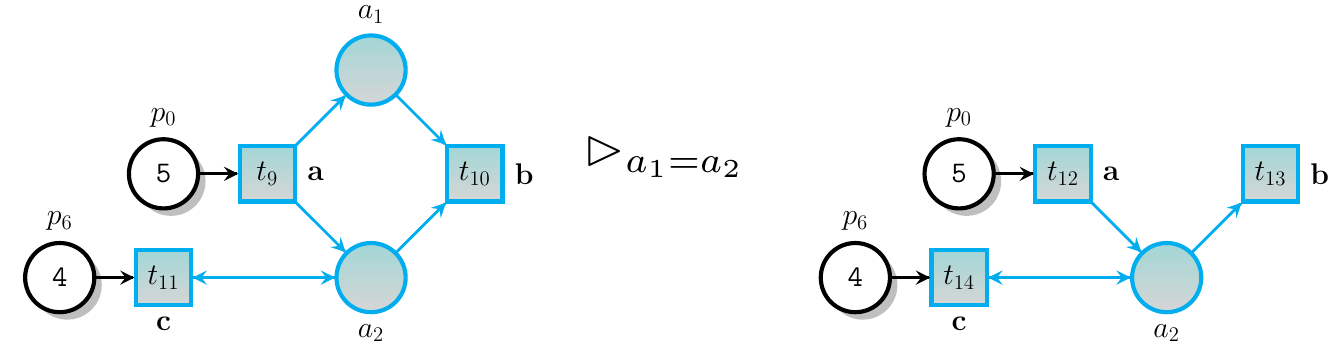}
  \caption{Example of sequence of four reductions leading from the net
    $N_1$ to $N_2$ from Fig.~\ref{fig:stahl}.\label{fig:reducs}}
\end{figure}

The tool works by applying successive reduction rules, in a
compositional way. We give an example of this mechanism in
Fig.~\ref{fig:reducs}, where we show the four reduction steps involved
in our running example.

The first step is a direct application of rule [RED] inside a larger
context; in each case we use colours to emphasize the sub-net where
the rule is applied. The two following ones are variations of rule
[CONCAT]. Each rule introducing a fresh ``place variable'', $a_1$ and
$a_2$. Finally, after simplification, we obtain a net with a new
opportunity to apply a redundancy rule.

It is possible to prove that each reduction step computed by \Reduce, from a net
$(M_i, m_i)$ to $(M_{i+1}, m_{i+1})$ with constraints $E_i$, is such that $(M_i,
m_{i}) \reduc_{E_i} (M_{i+1}, m_{i+1})$. From Th.~\ref{th:transitivity}, we
have $(M_0, m_0) \reduc_{E} (M_{n}, m_{n})$, i.e., the results computed by
\Reduce always translate into valid polyhedral abstractions.

\medskip
In conclusion, we can use \Reduce to compute polyhedral abstractions
automatically. In the other direction, we can use our notion of
equivalence to prove the correctness of new reduction patterns that
could be added in the tool. While it is not always possible to reduce
the complexity of a net using this approach, we observed in our
experiments (Sect.~\ref{sec:experimental-results}) that, on a
benchmark suite that includes almost $1\,000$ instances of nets, about
half of them
can be reduced by a factor of more than $30\%$.

%%%%%%%%%%%%%%%%%%%%%%%%%%%%%%%%%%%%%%%%%%%%%%%%%%%%%%%%%%%%%%%%%%%%%%
%%%%%%%%%%%%%%%%%%%%%%%%%%%%%%%%%%%%%%%%%%%%%%%%%%%%%%%%%%%%%%%%%%%%%%

\section{SMT-based model checking using abstractions}
\label{sec:smt-based-model}

We introduce a general method for combining polyhedral
abstractions with  SMT-based model checking procedures.
Assume we have $(N_1, m_1) \reduc_E (N_2, m_2)$, where the nets
$N_1, N_2$ have sets of places $P_1, P_2$ respectively. In the
following, we use $\vec{p_1} \defeq (p^{1}_{1}, \dotsc, p^{1}_{k})$ and
$\vec{p_2} \defeq (p^{2}_{1}, \dotsc, p^{2}_{l})$ for the places in $P_1$
and $P_2$. We also consider (disjoint) sequences of variables,
$\vec{x}$ and
$\vec{y}$, ranging over (the places of) $N_1$
and $N_2$.
With these notations, we denote $\tilde E(\vec{x}, \vec{y})$ the
formula obtained from $E$ where place names in $N_1$ are replaced with
variables in $\vec{x}$, and place names in $N_2$ are replaced with
variables in $\vec{y}$. When we have the same place in both nets, say
$p^{1}_{i} = p^{2}_{j}$, we also add the constraint $(x_i = y_j)$ to
$\tilde{E}$ in order to avoid shadowing variables. (Remark that
$\tilde{E}(\vec{p_1}, \vec{p_2})$ is equivalent to $E$, since
equalities $x_i = y_j$ become tautologies in this case.)
\begin{align}
  \tilde{E}(\vec{x}, \vec{y}) &\defeq E\{\vec{p_1} \leftarrow \vec{x}\}
                                \{\vec{p_2} \leftarrow \vec{y}\}\land
                                \bigwedge_{\{ (i,j) \mid p^{1}_{i} = p^{2}_{j}\}} (x_i = y_j) \label{E_tilde}
\end{align}

Given a formula $F$, we denote $\fv{F}$ the set of free variables
contained in it. Assume $F_1$ is a property that we want to study on
$N_1$, without loss of generality we can enforce the condition
$(\fv{F_1} \setminus P_1) \cap (\fv{E} \setminus P_1) = \emptyset$
(meaning we can always rename the variables in $F_1$ and $E$ that are
not places in $N_1$).  This condition ensures that the property
checked on the initial net does not inadvertently contain new variables
introduced during the reduction.

\begin{definition}[$E$-transform formula]
  Assume $(N_1, m_1) \reduc_E (N_2, m_2)$ and take $F_1$ a
  property with variables in $P_1$ such that
  $(\fv{F_1} \setminus P_1) \cap (\fv{E} \setminus P_1) =
  \emptyset$. Formula
  $F_2(\vec{y}) \defeq \exists \vec{x}. \ \tilde{E}(\vec{x}, \vec{y}) \land F_1(\vec{x})$
  is the $E$-transform of $F_1$.
\end{definition}

The following property states that, to check an invariant $F_1$ on the reachable markings of $N_1$,
it is enough to check the corresponding $E$-transform formula $F_2$ on the reachable markings of $N_2$.

\begin{theorem}[Invariant conservation]\label{th:invariant_conservation}
  Assume $(N_1, m_1) \reduc_E (N_2, m_2)$ and that
  $F_2$ is the $E$-transform of formula $F_1$ on $N_1$. Then
   $F_1$ is an invariant on $N_1$ if and only if
  $F_2$ is an invariant on $N_2$.
\end{theorem}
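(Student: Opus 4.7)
The plan is to derive each direction of the biconditional from the bidirectional correspondence between $R(N_1, m_1)$ and $R(N_2, m_2)$ provided by Lemmas~\ref{reachability} and~\ref{iclemma}, lifted to the formula level via the definition of $\tilde E$ in~\eqref{E_tilde}.

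For $(\Rightarrow)$, I take any reachable $m_2' \in R(N_2, m_2)$ and seek an existential witness for $F_2(m_2')$. Using the symmetry of $\reduc_E$, the symmetric instance of Lemma~\ref{reachability} (applied to $(N_2, m_2) \reduc_E (N_1, m_1)$) yields $m_1' \in R(N_1, m_1)$ with $m_1' \uplus m_2' \models E$; by the construction of $\tilde E$ this rewrites as $\tilde E(m_1', m_2')$ (the shared-place equalities $x_i = y_j$ in~\eqref{E_tilde} are satisfied automatically because $m_1'$ and $m_2'$ are compatible). Invariance of $F_1$ gives $m_1' \models F_1$, so $\vec x := m_1'(\vec p_1)$ witnesses $m_2' \models F_2$, and thus $F_2$ is an invariant on $N_2$.

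For $(\Leftarrow)$, I take $m_1' \in R(N_1, m_1)$ and aim for $m_1' \models F_1$. Lemma~\ref{reachability} produces $m_2' \in R(N_2, m_2)$ with $\tilde E(m_1', m_2')$, and $F_2$-invariance furnishes some valuation $\vec x_0$ with $\tilde E(\vec x_0, m_2') \land F_1(\vec x_0)$; by Lemma~\ref{iclemma} this $\vec x_0$ itself corresponds to a reachable marking of $N_1$. The argument then concludes by routing the claim through its reachability form: any $m_1^\star$ falsifying $F_1$ would propagate by Lemma~\ref{reachability} to a reachable $m_2^\star$ whose preimages under $\tilde E$ (all reachable by Lemma~\ref{iclemma}) exhibit the failure, yielding a contradiction with $F_2$-invariance.

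The main obstacle lies in the last step of the reverse direction: the existential witness $\vec x_0$ supplied by $F_2(m_2')$ need not coincide with the particular marking $m_1'$ we started from, and different preimages of $m_2'$ under $\tilde E$ may behave differently with respect to $F_1$. The cleanest resolution I see is to recast the statement as its contrapositive at the reachability level, namely ``$\lnot F_1$ is reachable on $N_1$ if and only if the $E$-transform of $\lnot F_1$ is reachable on $N_2$'', in which each direction is an immediate consequence of one of the two lemmas; the invariance biconditional then follows by taking contrapositives, so that the existential/universal mismatch dissolves in the symmetric quantifier structure of reachability.
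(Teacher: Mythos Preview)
Your $(\Rightarrow)$ argument matches the paper's, and you correctly identify the real obstacle in the $(\Leftarrow)$ direction: from $m_2' \models F_2$ one obtains only \emph{some} preimage $\vec{x}_0$ satisfying $F_1$, not the given $m_1'$. The paper itself glosses over this with ``the proof is similar in the other direction''.

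However, your proposed resolution does not prove the theorem as stated. The contrapositive route establishes that $F_1$ is an invariant on $N_1$ if and only if the $E$-transform of $\neg F_1$ is not reachable on $N_2$. That is \emph{not} the same as ``$F_2$ is an invariant on $N_2$'': the latter means that $\neg F_2(\vec y) \equiv \forall \vec x.\,\bigl(\tilde E(\vec x,\vec y) \Rightarrow \neg F_1(\vec x)\bigr)$ is not reachable, whereas the $E$-transform of $\neg F_1$ is $\exists \vec x.\,\bigl(\tilde E(\vec x,\vec y) \land \neg F_1(\vec x)\bigr)$. These two formulas differ exactly when some but not all preimages of a reachable $m_2'$ satisfy $F_1$---the very situation you flagged---so the quantifier mismatch does not ``dissolve''. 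Concretely: let $N_1$ have places $p_1,p_2$ with two silent transitions moving a token in each direction and initial marking $p_1=p_2=1$; let $N_2$ have a single place $a$ with $a=2$ and no transitions; and let $E$ be $(a = p_1 + p_2)$. One checks directly that $(N_1,m_1) \reduc_E (N_2,m_2)$. With $F_1 \defeq (p_1 \geq 1)$, the reachable marking $(p_1,p_2)=(0,2)$ violates $F_1$, yet $F_2$ holds at the unique reachable marking $a=2$ of $N_2$ (witnessed by $p_1=1,\ p_2=1$). Hence $F_2$ is an invariant on $N_2$ while $F_1$ is not on $N_1$: the $(\Leftarrow)$ direction is false as written. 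Your reachability reformulation is the sound repair, but it yields a corrected statement---with the universal form $\forall \vec x.\,\bigl(\tilde E(\vec x,\vec y) \Rightarrow F_1(\vec x)\bigr)$ in place of $F_2$---rather than a proof of the theorem as it stands.
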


\begin{proof}
  Assume $(N_1, m_1) \reduc_E (N_2, m_2)$ and property $F_1$ is an invariant on
  $N_1$.  Consider $m_2'$ a reachable marking in $N_2$.  By definition of
  $E$-abstraction, we have at least one reachable marking $m_1'$ in $N_1$ such
  that $m_1' \uplus m_2' \models E$.  Since $F_1$ is an invariant on $N_1$ we
  have $m_1' \models F_1$.  The condition $m_1' \uplus m_2' \models E$ is
  equivalent to $\underline{m_1'} \land \underline{m_2'} \land E$ satisfiable.
  By definition we have $\tilde{E}(\vec{p_1}, \vec{p_2}) \equiv E$, which
  implies $\underline{m_1'}(\vec{p_1}) \land \underline{m_2'}(\vec{p_2}) \land
  \tilde{E}(\vec{p_1}, \vec{p_2}) \land F_1(\vec{p_1})$ satisfiable, since the
  only variables that are both in $F_1$ and $E$ must also be in $N_1$. Hence,
  $m_2'$ satisfies the $E$-transform formula of $F_1$.
  The proof is similar in the other direction.
\end{proof}

Since $F_1$ invariant on $N_1$ is equivalent to $\neg F_1$ not
reachable, we can directly infer an equivalent conservation theorem
for reachability: to find a model of $F_1$ in $N_1$, it is enough to
find a model for
$F_1(\vec{p_1}) \wedge \tilde{E}(\vec{p_1}, \vec{p_2})$ in $N_2$.

\begin{theorem}[Reachability conservation]\label{th:formula_reachability_conservation}
  Assume $(N_1, m_1) \reduc_E (N_2, m_2)$ and that
  $F_2$ is the $E$-transform of formula $F_1$ on $N_1$. Then
  formula $F_1$ is reachable in $N_1$ if and only if
  $F_2$ is reachable in $N_2$.
\end{theorem}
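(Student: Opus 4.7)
The plan is to prove the two implications directly, using the two basic abstraction properties already established: Lemma~\ref{reachability} gives the forward direction, and Lemma~\ref{iclemma} gives the backward direction. The structure will mirror the proof of Theorem~\ref{th:invariant_conservation}, but for an existential witness rather than a universal one. A duality approach through $\neg F_1$ does \emph{not} work transparently here, since the $E$-transform of $\neg F_1$ is $\exists \vec{x}.\, \tilde{E}(\vec{x}, \vec{y}) \land \neg F_1(\vec{x})$, whereas $\neg F_2(\vec{y})$ expands to $\forall \vec{x}.\, \tilde{E}(\vec{x}, \vec{y}) \Rightarrow \neg F_1(\vec{x})$; so a direct argument is cleaner.

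For the forward direction, assume $F_1$ is reachable in $N_1$, so there exists $m_1' \in R(N_1, m_1)$ with $m_1' \models F_1$. By Lemma~\ref{reachability} applied to $(N_1, m_1) \reduc_E (N_2, m_2)$, I obtain $m_2' \in R(N_2, m_2)$ with $m_1' \uplus m_2' \models E$. To show $m_2' \models F_2$, I instantiate the existential quantifier of $F_2$ with the tuple $m_1'(\vec{p_1})$: the conjunct $F_1(m_1'(\vec{p_1}))$ holds because $m_1' \models F_1$, and $\tilde{E}(m_1'(\vec{p_1}), m_2'(\vec{p_2}))$ reduces to $E$ (as noted in the remark right after equation~\eqref{E_tilde}), which is satisfied by $m_1' \uplus m_2'$.

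For the backward direction, assume $F_2$ is reachable in $N_2$, so there exists $m_2' \in R(N_2, m_2)$ with $m_2' \models F_2$. Unfolding the existential in $F_2$ yields a tuple of non-negative integers $\vec{a}$ such that both $\tilde{E}(\vec{a}, m_2'(\vec{p_2}))$ and $F_1(\vec{a})$ hold. Let $m_1'$ be the marking of $N_1$ defined by $m_1'(\vec{p_1}) = \vec{a}$. Then $m_1' \models F_1$ holds by construction, and $m_1' \uplus m_2' \models E$ follows from $\tilde{E}(\vec{a}, m_2'(\vec{p_2}))$. Applying Lemma~\ref{iclemma}, $m_1'$ is reachable in $N_1$, which gives the desired witness for reachability of $F_1$.

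The only delicate point is the bookkeeping around variables shared by $N_1$ and $N_2$: the definition of $\tilde{E}$ explicitly adds constraints $x_i = y_j$ whenever $p^{1}_{i} = p^{2}_{j}$, so that any pair $(\vec{a}, m_2'(\vec{p_2}))$ satisfying $\tilde{E}$ automatically agrees on those shared places. This guarantees that $m_1'$ and $m_2'$ are compatible, so $m_1' \uplus m_2'$ is well-defined and $m_1' \uplus m_2' \models E$ in the backward step. The side condition $(\fv{F_1}\setminus P_1)\cap(\fv{E}\setminus P_1)=\emptyset$ on $F_1$ ensures no accidental variable capture between $F_1$ and the fresh variables introduced by $E$, so the existential witness argument goes through without interference.
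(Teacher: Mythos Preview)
Your proof is correct and follows essentially the same approach as the paper: the forward direction uses the existence of a matching reachable $m_2'$ (your Lemma~\ref{reachability}, the paper's ``by definition of $E$-abstraction''), and the backward direction extracts the existential witness from $F_2$ to build $m_1'$ and then invokes reachability in $N_1$ via the symmetric abstraction (your Lemma~\ref{iclemma}). Your write-up is in fact more explicit than the paper's, which only spells out the forward direction and dismisses the converse as ``similar''; your careful handling of the witness extraction for the backward step, the compatibility constraints in $\tilde{E}$, and the remark that naive duality through $\neg F_1$ fails, are all useful clarifications absent from the original.
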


\begin{proof}
  Assume $(N_1, m_1) \reduc_E (N_2, m_2)$ and property $F_1$ is reachable in
  $N_1$. Hence, there exists a reachable marking $m_1'$ in $N_1$ such that $m_1'
  \models F_1$. By definition of $E$-abstraction, we have at least one reachable
  marking $m_2'$ in $N_1$ such that $m_1' \uplus m_2' \models E$. The condition
  $m_1' \uplus m_2' \models E$ is equivalent to $\underline{m_1'}(\vec{p_1})
  \land \underline{m_2'}(\vec{p_2}) \land E$ satisfiable. By definition we have
  $\tilde{E}(\vec{p_1}, \vec{p_2}) \equiv E$, which implies $\underline{m_1'}
  \land \underline{m_2'} \land \tilde{E}(\vec{p_1}, \vec{p_2}) \land
  F_1(\vec{p_1})$ satisfiable, since the only variables that are both in $F_1$
  and $E$ must also be in $N_1$. Hence, $m_2'$ satisfies the $E$-transform
  formula of $F_1$.
  The proof is similar in the other direction.
\end{proof}

%%%%%%%%%%%%%%%%%%%%%%%%%%%%%%%%%%%%%%%%%%%%%%%%%%%%%%%%%%%%%%%%%%%%%%
%%%%%%%%%%%%%%%%%%%%%%%%%%%%%%%%%%%%%%%%%%%%%%%%%%%%%%%%%%%%%%%%%%%%%%

\section{BMC and PDR implementation}
\label{sec:implementation}

We developed a prototype model checker that takes advantage of net
reductions. The tool includes two main verification procedures that
have been developed for generalized Petri nets. (No specific
optimizations are applied when we know the net is safe, like for
instance using Boolean formulas instead of QF-LIA.) These procedures
correspond to instantiations of the BMC and PDR methods for checking
general reachability properties on Petri nets. We sketch these two
procedures below.

\subsection{Encoding Petri nets semantics using Linear Integer Arithmetic}

Our approach is based on a revisit of the semantics of Petri nets
using Linear Integer Arithmetic formulas.

We already defined (Sect.~\ref{sec:petri-nets-linear}) a helper
formula, or \emph{operator}, $\mathrm{ENBL}_t(\vec{x})$ such that
$\mathrm{ENBL}_t(\vec{x}) \wedge \underline{m}(\vec{x})$ is true when $t$ is
enabled at $m$.
We can define, in the same way, a linear predicate to describe the relation
between the markings before and after some transition $t$ fires. To this end, we
use a vector $\vec{x}'$ of ``primed variables'' $(x'_1, \dots, x'_n)$, where
$x'_i$ will stand for the marking of place $p_i$ after a transition is fired.

With this convention, formula $\mathrm{FIRE}_t(\vec{x}, \vec{x}')$ is such that
$\mathrm{FIRE}_t(m, m')$ entails $m \xrightarrow{t} m'$ when $t$ is
enabled at $m$, or $m = m'$.

\medskip
With all these notations, we can define a predicate
$\mathrm{T}(\vec{x}, \vec{x}')$ that ``encodes'' the effect of firing
at most one transition in the net $N$. By construction, formula
$\underline{m}(\vec{x}) \land \mathrm{T}(\vec{x}, \vec{x}')
\land \underline{m'}(\vec{x}')$ is true when $m \trans{} m'$, or when
$m = m'$.
\begin{align}
  \mathrm{ENBL}_t(\vec{x}) &\ \defeq\  \textstyle \bigwedge_{i \in 1..n} \left ( x_i \geqslant
  {\pre(t)}(p_i) \right )\\
  \Delta_t(\vec{x}, \vec{x}') &\ \defeq\ \textstyle  \bigwedge_{i \in 1..n} (x_i' =
                                x_i + \post(t,p_i) -
                                \pre(t,
                                p_i)) \label{eq:delta}\\
  \mathrm{EQ}(\vec{x}, \vec{x}') &\ \defeq\ \textstyle \bigwedge_{i \in 1..n} x_i =
                                   x_i'\\
  \mathrm{FIRE}_t(\vec{x}, \vec{x}') &\ \defeq\ \mathrm{EQ}(\vec{x},
                                   \vec{x}') \vee \left
                                   ({\mathrm{ENBL}}_t(\vec{x})
                                   \land \Delta_t(\vec{x},
                                   \vec{x}') \right ) \\
  \mathrm{T}(\vec{x}, \vec{x}') &\ \defeq\ \textstyle \mathrm{EQ}(\vec{x}, \vec{x}')
                              \lor
                              \bigvee_{t \in T} \left( \mathrm{ENBL}_t(\vec{x}) \land
                              \Delta_t(\vec{x}, \vec{x}') \right )
\end{align}

\subsection{Bounded Model Checking (BMC)}

BMC is an iterative method for exploring the state space of
finite-state systems by unrolling their
transitions~\cite{biere_symbolic_1999}. The method was originally
based on an encoding of transition systems into (a family of)
propositional logic formulas and the use of SAT solvers to check these
formulas for satisfiability~\cite{clarke_bounded_2001}.  There are
several works that adapt BMC to Petri nets, such
as~\cite{heljanko2001bounded}. More recently, this approach was
extended to more expressive models, and richer theories, using SMT
solvers~\cite{armando_bounded_2006}. Our goal is not to develop a
state of the art BMC model checker for generalized Petri
nets. Instead, we develop a textbook implementation that is enough to
test the impact on performances when using reductions.

\subsubsection{Description of the algorithm}

In BMC, we try to find a reachable marking $m$ that is a model for a given
formula $F$, that usually models a set of ``feared events''. The algorithm (see
function \hyperref[fun:bmc]{\texttt{BMC}}) starts by computing a formula, say $\phi_0$,
representing the initial marking and checking whether $\phi_0 \land F$ is
satisfiable (meaning $F$ is initially true). If the formula is unsatisfiable,
we compute a formula $\phi_1$ representing all the markings reachable in one
step, or less, from the initial marking and check $\phi_1 \land F$. This way, we
compute a sequence of formulas $(\phi_i)_{i \in \Nat}$ until either $\phi_i
\land F$ is satisfiable (in which case a counter-example is found) or we have
$\phi_{i+1} \Rightarrow \phi_i$ (in which case we reach a fixed point and no
counter-example exists).
The BMC method is not complete since it is not possible, in general,
to bound the number of iterations needed to give an answer. Also, when
the net is unbounded, we may very well have an infinite sequence of
formulas $\phi_0 \subsetneq \phi_1 \subsetneq \dots$ However, in
practice, this method can be very efficient to find a counter-example
when it exists.

The crux of the method is to compute formulas $\phi_i$ that represent
the set of markings reachable using firing sequences of length at most
$i$. We show how we can build such formulas incrementally. We assume
that we have a marked net $(N, m_0)$ with places
$P = \{p_1, \dots, p_n\}$ and transitions $T = \{t_1, \dots, t_k\}$.
In the remainder of this section, we build formulas that express
constraints between markings $m$ and $m'$ such that $m \rightarrow m'$
in $N$. Hence we define formulas with $2\, n$ variables. We use the
notation $\psi(\vec{x}, \vec{x}')$ as a shorthand for
$\psi(x_1, \dots, x_n, x_1', \dots, x_n')$.

\medskip
Formula $\phi_i$ is the result of connecting $i$ successive
occurrences of formulas of the form
$\mathrm{T}(\vec{x}_j, \vec{x}_{j+1})$. We define the formulas
inductively, with a base case ($\phi_0$) which states that only $m_0$
is reachable initially. To define the $\phi_i$'s, we assume that we
have a collection of (pairwise disjoint) sequences of variables,
$(\vec{x}_i)_{i \in \Nat}$. In the following listings, we use the
auxiliary function \texttt{freshVariables} to iterate over this
family of variable vectors.
\[
  \phi_0(N, m_0) \defeq \underline{m_0}(\vec{x}_0)\qquad
  \phi_{i+1}(N, m_0) \defeq \phi_i(N, m_0) \land \mathrm{T}(\vec{x}_{i},
                       \vec{x}_{i+1}) %\vspace*{-2mm}
\]

\begin{function}[htbp]
  \DontPrintSemicolon

  \SetKwFunction{unsat}{unsat}
  \SetKwFunction{freshvariables}{freshVariables}

  \KwResult{$\top$ if $F$ is reachable (meaning $\neg F$ is not an invariant)}
  \BlankLine

  $\vec{x} \gets \freshvariables{}$\;
  $\phi \gets \underline{m_0}(\vec{x})$\;
  \BlankLine

  \While{\unsat{$\underline{m_0}(\vec{x}) \land \phi \land (\neg F)(\vec{x})$}}
  {
    $\vec{x}' \gets \freshvariables{}$\;
    $\phi \gets \phi \land T(\vec{x}, \vec{x}')$\;
    $\vec{x} \gets \vec{x}'$\;
    }
    \BlankLine

  \Return{$\top$}
  \caption{BMC($\mathrm{EF}\, F$: \protect{linear predicate})}
  \label{fun:bmc}
\end{function}

We can prove that this family of BMC formulas provide a way to check
 reachability properties, meaning that formula $F$ is reachable in $(N, m_0)$ if
 and only if there exists $i \geq 0$ such that $F(\vec{x}_i) \wedge \phi_{i}(N,
 m_0)$ is satisfiable. The approach we describe here is well-known (see for
 instance~\cite{biere_symbolic_1999}).
It is also quite simplified. Actual model checkers that rely on BMC apply
several optimization techniques, such as compositional reasoning; acceleration
methods; or the use of invariants on the underlying model to add extra
constraints. We do not consider such optimizations here, on purpose, since our
motivation is to study the impact of polyhedral abstractions. We believe that
our use of reductions is orthogonal and does not overlap with many of these
optimizations, in the sense that we do not preclude them, and that the
performance gain we observe with reductions could not be obtained with these \linebreak
optimizations.

\subsubsection{Combination with polyhedral abstraction}

Assume we have $(N_1, m_1) \reduc_E (N_2, m_2)$. We denote
$\mathrm{T}_1, \mathrm{T}_2$ the equivalent of formula $\mathrm{T}$,
above, for the nets $N_1, N_2$ respectively. We also use $\vec{x}$,
$\vec{y}$ for sequences of variables ranging over (the places of)
$N_1$ and $N_2$ respectively.  We should use $\phi(N_1, m_1)$ for the
family of formulas built using operator $\mathrm{T}_1$ and variables
$\vec{x}_0, \vec{x}_1, \dots$ and similarly for $\phi(N_2, m_2)$,
where we use $\mathrm{T}_2$ and variables of the form $\vec{y}_i$.

The following property states that, to find a model of $F$ in the reachable
markings of $N_1$ (meaning $\mathrm{EF}\, F$ true), it is enough to find a model
for its $E$-transform in $N_2$.

\begin{theorem}[BMC with $E$-transform]\label{th:bmcreduc}
  Assume $(N_1, m_1) \reduc_E (N_2, m_2)$ and that
  $F_2$ is the $E$-transform of $F_1$. Formula
  $F_1$ is reachable in $N_1$ if and only if there exists
  $j \geq 0$ such that $F_2(\vec{y}_j) \wedge \phi_j(N_2, m_2)$ is
  satisfiable.
\end{theorem}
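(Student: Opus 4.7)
The plan is to derive this statement as a direct corollary of two results already in place: Theorem~\ref{th:formula_reachability_conservation} (reachability conservation under polyhedral abstraction) and the standard correctness of the BMC unrolling for Petri nets with respect to QF-LIA semantics. Concretely, I would factor the equivalence through reachability in $N_2$:
\[
F_1 \text{ reachable in } N_1 \ \Longleftrightarrow\ F_2 \text{ reachable in } N_2 \ \Longleftrightarrow\ \exists j.\ F_2(\vec{y}_j)\wedge \phi_j(N_2,m_2)\ \text{satisfiable.}
\]

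First, I would invoke Theorem~\ref{th:formula_reachability_conservation} for the left equivalence, checking that its side condition applies: the statement of our theorem supplies $F_2$ as the $E$-transform of $F_1$, so we may assume (up to renaming of non-place variables in $F_1$) that $(\fv{F_1}\setminus P_1)\cap(\fv E\setminus P_1)=\emptyset$, which is exactly the freshness hypothesis needed to form $F_2$. This step gives me that finding a witness for $F_1$ in $R(N_1,m_1)$ is equivalent to finding a witness for $F_2$ in $R(N_2,m_2)$.

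Second, I would establish the right equivalence by a short inductive argument on $j$ showing that for every $j\ge 0$ and every marking $m$ of $N_2$, $m\in R(N_2,m_2)$ reached by a firing sequence of length at most $j$ if and only if there is an assignment to the vectors $\vec{y}_0,\ldots,\vec{y}_{j-1}$ such that $\phi_j(N_2,m_2)$ is satisfied and $\vec{y}_j$ is mapped to $m$. The base case is immediate from $\phi_0 \equiv \underline{m_2}(\vec{y}_0)$, and the inductive step uses the fact that $\mathrm{T}_2(\vec{y}_i,\vec{y}_{i+1})$ holds precisely when either $\vec{y}_i=\vec{y}_{i+1}$ (the $\mathrm{EQ}$ disjunct, which accommodates sequences of length strictly less than $j$) or when some transition $t\in T_2$ is enabled at $\vec{y}_i$ and fires to $\vec{y}_{i+1}$. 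Consequently, $F_2$ is reachable in $N_2$ iff for some $j$ we have $F_2(\vec{y}_j)\wedge\phi_j(N_2,m_2)$ satisfiable.

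Combining the two equivalences yields the theorem. The argument is essentially routine; the only mildly delicate point is the bookkeeping on variable freshness for $F_2$ (so that the quantified $\vec{x}$ inside $F_2$ does not collide with the unrolling variables $\vec{y}_j$), which is guaranteed by the \texttt{freshVariables} discipline and by the definition of the $E$-transform. There is no genuine obstacle: the BMC unrolling correctness is standard, and the polyhedral-abstraction ingredient is exactly what Theorem~\ref{th:formula_reachability_conservation} provides.
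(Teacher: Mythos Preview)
Your proposal is correct and follows essentially the same approach as the paper: both factor the equivalence through reachability of $F_2$ in $N_2$, invoking Theorem~\ref{th:formula_reachability_conservation} for one direction and the standard soundness/completeness of the BMC unrolling (proved by induction on $j$) for the other. Your write-up is slightly more explicit about the inductive argument and the variable-freshness bookkeeping, but the structure and key ingredients are identical.
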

\begin{proof}
  Our proof relies on the property that BMC is sound and complete for
  finding a finite counter-example (see
  e.g.\cite{cimatti_infinite-state_2016}): there is a firing sequence
  $\varrho$, of size less than $i$, such that
  $m_1 \wtrans{\varrho} m'_1$ and $m'_1 \models F_1$---meaning property
  $F_1$ is reachable in $N_1$--- if and only if
  $F_1 \wedge \phi_i(N_1, m_1)$. We can prove this property by
  induction on the value of $i$ and use the fact that $m \wtrans{} m'$
  or $m = m'$ in $N_1$ entails $\mathrm{T_1}(m, m')$.

  By our \emph{conservation of reachability} theorem
  (Th.~\ref{th:formula_reachability_conservation}), property $F_1$ is
  reachable in $N_1$ (say with a counter-example of size $i$) if and
  only if property $F_2$ is reachable in $N_2$ (say with a
  counter-example of size $j$). Therefore there exists $i$ such that
  $F_1(\vec{x}_i) \wedge \phi_i(N_1, m_1)$ is satisfiable if and only
  if there exists $j$ such that
  $F_2(\vec{y}_j) \wedge \phi_j(N_2, m_2)$ is satisfiable.
\end{proof}

We can give a stronger result, comparing the value of $i$ and $j$,
when the reductions used in proving the $E$-abstraction equivalence
never introduce new transitions.  This is the case, for example, with
the reductions computed using the \Reduce tool. Indeed, in this case,
we can show that we may find a witness of length $i$ in $N_1$ (a
firing sequence of length $i$ showing that $F_1$ is reachable in
$N_1$) when we find a witness of length $j \leq i$ in $N_2$. This is
because, in this case, reductions may compact a sequence of several
transitions into a single one or, at worst, not change it.  Take the
example of the [CONCAT] rule in Fig.~\ref{fig:reducs}. Therefore
BMC benefits from reductions in two ways. First because we can reduce
the size of formulas $\phi$ (which are proportional to the size of the
net), but also because we can accelerate transition unrolling in the
reduced net.

\subsection{Property Directed Reachability (PDR)}

While BMC is the right choice when we try to find counter-examples, it usually
performs poorly when we want to check an invariant property, $\mathrm{AG}\, F$.
There are techniques that are better suited to prove \emph{inductive invariants}
in a transition system; that is a property that is true initially and stays true
after firing any transition.

In order to check {invariants} with SMPT, we have implemented a method called
PDR~\cite{jhala_sat-based_2011,hutchison_understanding_2012} (also known as
IC3), which incrementally generates clauses that are inductive ``relative to
stepwise approximate reachability information''. PDR is a combination of
induction, over-approximation, and SAT solving. For SMPT, we developed a similar
method that uses SMT solving, to deal with markings and transitions, and that
can take advantage of polyhedral abstractions.

As with BMC, our focus is not to study the performances of PDR per se,
but its combination with polyhedral abstractions. We give more details
about our adaptation of PDR for generalized Petri nets
in~\cite{DBLP:conf/tacas/AmatDH22}. We keep enough information about
our implementation in order to state an equivalent of
Th.~\ref{th:bmcreduc} for PDR.

We use similar notations as with BMC, but with a small
difference. Indeed, since PDR ``unrolls at most one transition'' at a
time, we only need two vectors of variables instead of a family
$(\vec{x}_i)_{i \geqslant 0}$ like with BMC: we use unprimed variables
($\vec{x}$) to represent states before firing a transition and primed
variables ($\vec{x}'$) to represent the reached states.

PDR requires to define a set of \emph{safe states}, described as the
models of some property $F$. It also requires a set of initial states,
$I$. In our case $I \defeq \underline{m_0}$.

The procedure is complete for finite transition systems, for instance
with bounded Petri nets. We can also prove termination in the general
case when property $\neg F$ is \emph{monotonic}, meaning that
$m \models \neg F$ implies that $m'\models \neg F$ for all markings
$m'$ that cover $m$ (that is when $m' \geq m$, component-wise). An
intuition is that it is enough, in this case, to check the property on
the minimal coverability set of the net, which is always finite (see
e.g.~\cite{finkel1991minimal}).

A formula ${F}$ is \textit{inductive}~\cite{hutchison_understanding_2012} when
${I}(\vec{x}) \myplies {F}(\vec{x})$ and ${F}(\vec{x}) \land \mathrm{T}(\vec{x},
\vec{x}') \myplies {F}(\vec{x}')$ are tautologies. It is \textit{inductive
relative} to formula ${G}$ if both ${I}(\vec{x}) \myplies {F}(\vec{x})$ and
${G}(\vec{x}) \land {F}(\vec{x}) \land \mathrm{T}(\vec{x}, \vec{x}') \myplies
{F}(\vec{x}')$ are tautologies. With PDR we compute \textit{Over Approximated
Reachability Sequences} (OARS), meaning sequences of formulas $(F_0, \dots,
F_{k+1})$, with variables in $\vec{x}$, that are monotonic: $I(\vec{x}) \Rightarrow F_0(\vec{x})$,
$F_i(\vec{x}) \myplies F_{i+1}(\vec{x})$ for all $i \in 0..k$, and $F_{k+1} \myplies F$; and
satisfies \textit{consecution}: ${F_i}(\vec{x}) \wedge \mathrm{T}(\vec{x},
\vec{x}') \myplies {F_{i+1}}(\vec{x}')$ for all $i \leq k+1$. The formulas $F_i$
change at each iteration of the procedure (each time we increase $k$). The
procedure stops when we find an index $i$ such that $F_i = F_{i+1}$. In this
case we know that $F$ is an invariant. We can also stop during the iteration if
we find a counter-example.

\subsubsection{Description of the algorithm}

Our implementation follows closely the algorithm for IC3 described
in~\cite{hutchison_understanding_2012}. We only give the pseudo-code for the
four main functions (\hyperref[fun:prove]{\texttt{prove}}, \hyperref[fun:strengthen]{\texttt{strengthen}},
\hyperref[fun:inductively_generalize]{\texttt{inductivelyGeneralize}} and \hyperref[proc:push_generalization]{\texttt{pushGeneralization}}).\sloppy

The main function, \hyperref[fun:prove]{\texttt{prove}}, computes an \textit{Over Approximated
Reachability Sequence} (OARS) $(F_0, \allowbreak \dots, F_{k+1})$ of linear predicates,
called \emph{frames}, with variables in $\vec{x}$. An OARS meets the following
constraints: (1) it is monotonic: $F_i(\vec{x}) \land \neg F_{i+1}(\vec{x})$ is unsatisfiable, for
all $i \in 0..k$; (2) it contains the initial states: $I(\vec{x}) \land \neg F_0(\vec{x})$
is unsatisfiable; (3) it does not contain feared states: $F_{k+1}(\vec{x}) \land \neg F(\vec{x})$
is unsatisfiable; and (4) it satisfies consecution: ${F_i}(\vec{x}) \wedge
\mathrm{T}(\vec{x}, \vec{x}') \land \neg {F_{i+1}}(\vec{x}')$ is unsatisfiable for
all $i \in 0..k+1$.

By construction, each {frame} $F_i$ in the OARS is defined as a set of clauses,
$\CL({F_i})$, meaning that ${F_i}$ is built as a formula in CNF: $F_i =
\bigwedge_{\cl \in \CL(F_i)} \cl$. We also enforce that $\CL(F_{i+1}) \subseteq
\CL(F_i)$ for all $i \in 0..k$, which means that the monotonicity property
between frames is trivially ensured.

The body of function \hyperref[fun:prove]{\texttt{prove}} contains a \emph{main iteration}
(line~\ref{l:main_iteration}) that increases the value of $k$ (the number of
levels of the OARS). At each step, we enter a second, minor iteration
(line~\ref{alg:minor_iteration} in function \hyperref[fun:strengthen]{\texttt{strengthen}}), where we
generate new minimal inductive clauses that will be propagated to all the
frames. Hence both the length of the OARS, and the set of clauses in its frames,
increase during computation.
The procedure stops when we find an index $i$ such that $F_i = F_{i+1}$. In this
case we know that $F_i$ is an inductive invariant satisfying $F$. We can also
stop during the iteration if we find a counter-example (a model $m$ of $\neg F$).
In this case, we can also return a trace leading to $m$.

\begin{function}[tb]\small
  \DontPrintSemicolon

  \SetKwFunction{sat}{sat}
  \SetKwFunction{strengthen}{strengthen}
  \SetKwFunction{propagateclauses}{propagateClauses}

  \KwResult{$\top$ if $F$ is an invariant, otherwise $\bot$ (meaning $\neg F$ is reachable)} \BlankLine

  \If{\sat{$I(\vec{x}) \land T(\vec{x}, \vec{x}') \land (\neg F)(\vec{x}')$}}
  {\KwRet{$\bot$}} \BlankLine

  $k \gets 1$,  $F_0 \gets I$, $F_1 \gets F$\;
  \BlankLine

  \While{$\top$\label{l:main_iteration}}{\If{\strengthen{$k$}}{\KwRet{$\bot$}}
    \BlankLine
    \propagateclauses{$k$}\;
    \BlankLine

    \If{$\CL(F_i) = \CL(F_{i+1})$ \textbf{for some} $1 \leqslant  i \leqslant
      k$}{\KwRet{$\top$}}
    \BlankLine
    $k \gets k + 1$\;}

  \caption{prove($\mathrm{AG}\, F$: \protect{linear predicate})}
  \label{fun:prove}
\end{function}

When we start the first minor iteration, we have $k = 1$, $F_0 = I$ and $F_{1} =
F$. If we have $F_k(\vec{x}) \wedge T(\vec{x}, \vec{x}') \wedge (\neg
F)(\vec{x}')$ unsatisfiable, it means that $F$ is inductive, so we can stop and
return that $F$ is an invariant. Otherwise, we proceed with the strengthen
phase, where each model of $F_k(\vec{x}) \wedge T(\vec{x}, \vec{x}') \wedge
(\neg F)(\vec{x}')$ becomes a potential counter-example, or \emph{witness}, that
we need to ``block'' (line $3$--$5$ of function \hyperref[fun:strengthen]{\texttt{strengthen}}).

Instead of blocking only one witness (and to overcome the problem with a
potential infinite number of witnesses), we first generalize it into a predicate
that abstracts similar dangerous states (see the calls to
\texttt{generalizeWitness}). We define the formula
$\underline{\hat{m}}(\vec{x}) \defeq \textstyle \bigwedge_{i \in 1..n} (x_i \ge
m(p_i))$ that is valid for every marking that covers $m$; in the sense that
${m'} \models \underline{\hat{m}}$ only when $m'\geq m$.  By virtue of the
monotonicity of the flow function of Petri nets, when $\neg F$ is monotonic and
$m$ is a witness, we know that all models of $\underline{\hat{m}}$ are also
witnesses. Hence we can improve the method by generating a \emph{minimal inductive
clause} (MIC) from $\neg \underline{\hat{m}}$ instead of $\neg
\underline{m}$. Another benefit of this choice is that
$\underline{\hat{m}}(\vec{x})$ is a conjunction of inequalities of the form $(x_j \geq
k_i)$, which greatly simplifies the computation of the MIC. When $F$ is
anti-monotonic ($\neg F$ is monotonic), we can prove the completeness of the
procedure using an adaptation of Dickson's lemma, which states that we cannot
find an infinite decreasing chain of witnesses (but the number of possible
witness may be extremely large).
Hence, when we block it, we learn new clauses from $\neg \underline{\hat{m}}$ that can be
propagated to the previous frames.

\begin{function}[tb]\small
  \DontPrintSemicolon

  \SetKwProg{try}{try}{:}{} \SetKwProg{catch}{catch}{:}{end}

  \SetKwFunction{sat}{sat}
  \SetKwFunction{generalizedwitness}{generalizeWitness}
  \SetKwFunction{inductivelygeneralize}{inductivelyGeneralize}
  \SetKwFunction{pushgeneralization}{pushGeneralization}

  \try{}
  {\While{$(m \trans{t} m') \models\ F_k(\vec{x}) \land T(\vec{x}, \vec{x}') \land (\neg F)(\vec{x}')$\label{alg:minor_iteration}}
    {
      \textcolor{red}{$\hat{m} \gets \generalizedwitness{m}$}\;\label{l:strengten_witness}
      $n \gets$ \inductivelygeneralize{$\hat{m}, k - 2, k$}\;
      \pushgeneralization{$\{(\hat{m}, n+1)\}, k$}\;
    }
    \KwRet{$\top$}}

    \BlankLine

  \catch{\text{counter example}}{\KwRet{$\bot$}}

  \caption{strengthen($k$ : \protect{current level})}
  \label{fun:strengthen}
\end{function}

Before pushing a new clause, we test whether $\underline{\hat{m}}$ is reachable from
previous frames. We take advantage of this opportunity to find if we have a
counter-example and, if not, to learn new clauses in the process. This is the
role of functions \hyperref[proc:push_generalization]{\texttt{pushGeneralization}} and
\hyperref[fun:inductively_generalize]{\texttt{inductivelyGeneralize}}.

The goal of \hyperref[fun:inductively_generalize]{\texttt{inductivelyGeneralize}} is to strengthen the invariants in
$(F_i)_{i \leqslant k}$ by learning new clauses and finding the smallest index
in $1..k$ that can lead to the dangerous states $\hat{m}$. The goal of
\hyperref[proc:push_generalization]{\texttt{pushGeneralization}} is to apply inductive generalization, starting
from the earliest possible level.

We find a counter example (in the call to \hyperref[fun:inductively_generalize]{\texttt{inductivelyGeneralize}}) if
the generalization from a witness found at level $k$, say $\hat{s}$, reaches
level $0$ and $F_0(\vec{x}) \land T(\vec{x}, \vec{x}') \land \hat{s}(\vec{x}')$
is satisfiable (line $1$ in \hyperref[fun:inductively_generalize]{\texttt{inductivelyGeneralize}}). Indeed, it means
that we can build a trace from $I$ to $\neg F$ by going through $F_1, \dots,
F_k$.

The method relies heavily on checking the satisfiability of linear formulas in
QF-LIA, which is achieved with a call to a SMT solver. In each function call, we
need to test if predicates of the form $F_i(\vec{x}) \land T(\vec{x}, \vec{x}') \land G(\vec{x}')$ are unsatisfiable
and, if not, enumerate its models. To accelerate the strengthening of frames, we
also rely on the unsat core of properties in order to compute\linebreak MIC.

\medskip
\begin{function}[tb]\small
  \DontPrintSemicolon

  \SetKwFunction{sat}{sat} \SetKwFunction{generateclause}{generateClause}

  \If{$min < 0$ \textbf{and} \sat{$F_0(\vec{x}) \land T(\vec{x}, \vec{x}')
   \land s(\vec{x}')$}} {\textbf{raise} $Counterexample$}
  \BlankLine

  \For{$i \gets \max(1, min + 1)$ \textbf{to} $k$} {\If{\sat{$F_i(\vec{x}) \land
  T(\vec{x}, \vec{x}') \land \neg s(\vec{x}) \land s(\vec{x}')$}}
  {$\generateclause{s, i-1, k}$\;\KwRet{$i - 1$}}}
  \BlankLine

  $\generateclause{s, k, k}$\;
  \BlankLine

  \KwRet{$k$}

  \caption{inductivelyGeneralize($s$ : cube,  $min$: level, $k$: \protect{level})}
  \label{fun:inductively_generalize}
\end{function}

\begin{function}[tb]\small
  \DontPrintSemicolon

  \SetKwFunction{sat}{sat}
  \SetKwFunction{generalizedwitness}{generalizeWitness}
  \SetKwFunction{inductivelygeneralize}{inductivelyGeneralize}

  \While{$\top$} {
    $(s,n) \gets \text{from } states \text{ minimizing } n$\;
    \BlankLine

    \If{$n > k$} {\KwRet{}}
    \BlankLine

    \If{$(m \trans{t} m') \models F_n(\vec{x}) \land T(\vec{x}, \vec{x}') \land s(\vec{x}')$}
    {
    \textcolor{red}{$\hat{m} \gets \generalizedwitness{m}$}\;\label{l:pushgeneralization_witness}
    $l \gets $\inductivelygeneralize{$\hat{m}, n - 2, k$}\;
    $states \gets states \cup \{(p, l + 1)\}$\;
    }
    \Else{
      $l \gets \inductivelygeneralize{s, n, k}$\;
      $states \gets states \setminus \{(s, n)\} \cup \{(s, l + 1)\}$\;
    }
  }

  \caption{pushGeneralization($states$: \text{set of} (state, level), $k$: \protect{level})}
  \label{proc:push_generalization}
\end{function}

\subsubsection{Combination with polyhedral abstraction}

Assume we have $(N_1, m_1) \reduc_E (N_2, m_2)$ and that $G_2$ is the
$E$-transform of formula $G_1$ on $N_1$. We also assume that $G_1$ and
$G_2$ are anti-monotonic (meaning $\neg G_1$ and $\neg G_2$ monotonic), in order
to ensure the termination of the PDR procedure.  (We can prove that $\tilde{E}$
is monotonic for systems $E$ computed with the \Reduce tool when the initial net
does not use inhibitor arcs.) To check that formula $G_1$ is an invariant on
$N_1$ (meaning $\mathrm{AG}\, G_1$ true), it is
enough~\cite{jhala_sat-based_2011} to incrementally build OARS $(F_0, \dots,
F_{k+1})$ on $N_1$ until $F_i = F_{i+1}$ for some index $i \in 0..k$. In this
context, $F_0 = \underline{m_1}$ and $F_{k+1} \Rightarrow G_1$. In a similar way
than with our extension of BMC with reductions, a corollary of our
\emph{invariant conservation} theorem (Th.~\ref{th:invariant_conservation}) is
that, to check that $G_1$ is an invariant on $N_1$, it is enough to build OARS
$(F'_0, \dots, F'_{l+1})$ on $N_2$ where $F'_0 = \underline{m_2}$ and $F'_{l+1}
\Rightarrow G_2$.

\begin{theorem}[PDR with $E$-transform]\label{th:pdrreduc} Assume $(N_1,
  m_1) \reduc_E (N_2, m_2)$ and that $G_2$ is the $E$-transform of
  $G_1$, both anti-monotonic formulas. Formula $G_1$ is an invariant on
  $N_1$ if and only if there exists $i\geq 0$ such that $F'_i = F'_{i+1}$ in the
  OARS built from net $N_2$ and formula $G_2$.
\end{theorem}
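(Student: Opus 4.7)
The plan is to reduce the statement to a composition of two previously-established facts: the invariant conservation theorem (Th.~\ref{th:invariant_conservation}), which transports invariance of $G_1$ on $N_1$ to invariance of $G_2$ on $N_2$, together with the standard soundness and completeness of PDR for anti-monotonic properties, which relates invariance of $G_2$ on $N_2$ to the existence of a fixpoint in the OARS built for $N_2$ and $G_2$. Since both ingredients are already available, the argument should be short and essentially compositional.

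Concretely, I would argue both directions separately. For the ``only if'' direction, assume $G_1$ is an invariant on $N_1$. By Th.~\ref{th:invariant_conservation}, $G_2$ is then an invariant on $N_2$. Because $\neg G_2$ is monotonic, the PDR procedure applied to $(N_2, m_2)$ with safety property $G_2$ terminates: at every iteration we generate a minimal inductive clause from a finite generalized witness, and Dickson's lemma (as invoked in the paragraph introducing the algorithm) prevents any infinite strictly decreasing chain of such witnesses. Since no true counter-example exists (because $G_2$ is invariant), termination can only occur through the fixpoint test in \hyperref[fun:prove]{\texttt{prove}}, so there is some $i \geq 0$ with $F'_i = F'_{i+1}$.

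For the ``if'' direction, suppose there is some $i \geq 0$ with $F'_i = F'_{i+1}$ in the OARS $(F'_0, \dots, F'_{l+1})$ on $N_2$. By the invariants maintained by the procedure, $F'_0 = \underline{m_2}$, $F'_{l+1} \Rightarrow G_2$, and consecution ${F'_j}(\vec{y}) \wedge \mathrm{T}_2(\vec{y}, \vec{y}') \Rightarrow {F'_{j+1}}(\vec{y}')$ holds for all $j$. The equality $F'_i = F'_{i+1}$ then makes $F'_i$ an inductive invariant of $N_2$ that entails $G_2$, so $G_2$ is an invariant on $N_2$. Applying the converse direction of Th.~\ref{th:invariant_conservation} yields that $G_1$ is an invariant on $N_1$.

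The hard part is not the logical structure of the proof, which is essentially ``chain Th.~\ref{th:invariant_conservation} with PDR soundness/completeness'', but rather justifying the termination argument cleanly, since the PDR invariants and the Dickson-style termination argument were only sketched in the running text rather than proved as a separate lemma. In a self-contained write-up I would state that argument as an auxiliary lemma (``PDR terminates on anti-monotonic properties and returns \textsc{true} iff the property is an invariant''), referring to \cite{DBLP:conf/tacas/AmatDH22} for a detailed proof, and then present the theorem above as a two-line corollary combining this lemma with Th.~\ref{th:invariant_conservation}.
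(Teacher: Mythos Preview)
Your proposal is correct and follows essentially the same approach as the paper: both compose the invariant conservation theorem (Th.~\ref{th:invariant_conservation}) with soundness and completeness of PDR for anti-monotonic properties. The only minor difference is in how termination is justified: the paper argues that the frames are upward-closed and hence one can work on the (finite) coverability graph and invoke finite-state PDR completeness directly~\cite{jhala_sat-based_2011,finkel1991minimal}, whereas you invoke Dickson's lemma on the sequence of generalized witnesses; these are two sides of the same well-quasi-order argument.
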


\begin{proof}
  Our proof relies on the property that PDR is sound and complete for
  finite-state systems, see for instance Th.~1 in
  \cite{jhala_sat-based_2011}: $G_1$ is an invariant on $(N_1, m_1)$
  if and only if there exists $i\geq 0$ such that $F_i =
  F_{i+1}$. Since we deal with monotonic formulas (in this case the feared states $\neg G_1$ and $\neg G_2$), the set of markings
  satisfying a frame $F_i$ is always upward-closed (if $m \models F_i$
  and $m'\geqslant m$ then also $m'\models F_i$), and therefore we can
  work on the \emph{coverability graph} of the net, instead on its
  actual marking graph, which is finite~\cite{finkel1991minimal}. The
  results follows from our \textit{invariant conservation} theorem
  (Th.~\ref{th:invariant_conservation}).
\end{proof}

\subsection{Combination of BMC and PDR}

In the next section (Sect.\ref{sec:experimental-results}), we report on the
results obtained with our implementation of BMC and PDR (with and without
reductions), on an independent and comprehensive set of benchmarks.

With PDR, we restrict ourselves to the proof of liveness properties,
$\mathrm{EF}\, \phi$ where $\phi$ is monotonic (or equivalently, invariants
$\mathrm{AG}\, \phi$ with $\phi$ anti-monotonic). In practice, we do not check
if $\phi$ is monotonic using our ``semantical'' definition. Instead, our
implementation uses a syntactical restriction that is a sufficient condition for
monotonicity. This is the case, for example, when testing the quasi-liveness of
a set of transitions. On the other hand, deadlock is not monotonic. In such
cases, we can only rely on the BMC procedure, which may not terminate if the net
has no deadlocks.  Hence, our best-case scenario is when we check a monotonic
property (or if a model for the property exists). In our benchmarks, we find
that almost $30\%$ of all the properties are monotonic.

We have plans to improve our PDR procedure to increase the set of properties
that can be handled. In particular, we know how to do better when the net is
$k$-bounded (and we know the value of $k$). We also have several proposals to
improve the computation of a good witness, and its MIC, in the general case. We
should explore all these ideas in a future work.

%%%%%%%%%%%%%%%%%%%%%%%%%%%%%%%%%%%%%%%%%%%%%%%%%%%%%%%%%%%%%%%%%%%%%%
%%%%%%%%%%%%%%%%%%%%%%%%%%%%%%%%%%%%%%%%%%%%%%%%%%%%%%%%%%%%%%%%%%%%%%

\section{Experimental results}
\label{sec:experimental-results}

We have implemented the approach described in
Sect.~\ref{sec:implementation} into a new tool, called {SMPT} (for
{Satisfiability Modulo P/T Nets}). The tool is open-source, under the
GPLv3 license, and is freely available on {GitHub}
(\url{https://github.com/nicolasAmat/SMPT/}). In this section, we
report on some experimental results obtained with SMPT (v2) on an
extensive benchmark of models and formulas provided by the Model
Checking Contest (MCC)~\cite{mcc2019,HillahK17}.

{SMPT} serves as a front-end to generic SMT solvers, such as
{z3}~\cite{de_moura_z3_2008,z3}. The tool can output sets of
constraints using the SMT-LIB format~\cite{BarFT-RR-17} and pipe them
to a {z3} process through the standard input. We have implemented our
tool with the goal to be as interoperable as possible, but we have not
conducted experiments with other solvers yet.

SMPT takes as inputs Petri nets defined using the \texttt{.net} format
of the TINA toolbox and can therefore also accept nets defined using
the PNML syntax. For formulas, we accept properties defined with the
XML syntax used in the MCC competition. The tool does not compute net
reductions directly but relies on the tool \Reduce, that we described
at the end of Sect.~\ref{sec:reach-equiv-net}.

The tool \Reduce is distributed with the standard distribution of the TINA
toolbox, which includes the most mature versions of our verification tools. We
also provide an open source, feature complete version of an equivalent tool,
called \textsc{Shrink} (\url{https://github.com/Fomys/pnets}), that provide
several Rust libraries for manipulating Petri nets and performing structural
reductions.

\subsection{Benchmarks and distribution of reduction ratios}

Our benchmark suite is built from a collection of $102$ models used in the MCC
competition. Most of the models are parametrized, and therefore there can be
several different \emph{instances} for the same model. There are about $1\,000$
different instances of Petri nets whose size vary widely, from $9$ to $50\,000$
places, and from $7$ to $200\,000$ transitions. Most nets are ordinary (non-zero
weights on all arcs are equal to $1$), but a significant number of instances (about
$150$) use weighted arcs. Overall, the collection provides a large number of
examples with various structural and behavioral characteristics, covering a
large variety of use cases.

Since our approach relies on the use of net reductions, it is natural
to wonder if reductions occur in practice. To answer this question, we
computed the {reduction ratio} ($r$), obtained using \Reduce, as a
quotient between the number of places before ($p_\mathrm{init}$) and
after ($p_\mathrm{red}$) reduction:
$r = (p_\mathrm{init} - p_\mathrm{red}) / p_\mathrm{init}$. We display
the results for the whole collection of instances in
Fig.~\ref{fig:reduction}, sorted in descending order.

A ratio of $100\%$ ($r = 1$) means that the net is \emph{fully
  reduced}; the resulting net has only one (empty) marking. We see
that there is a surprisingly high number of models that are totally
reducible with our approach (about $20$\% of the total number), with
approximately half of the instances that can be reduced by a ratio of
$30\%$ or more.

\begin{figure}[htb]
  \centering
  \includegraphics[width=\linewidth]{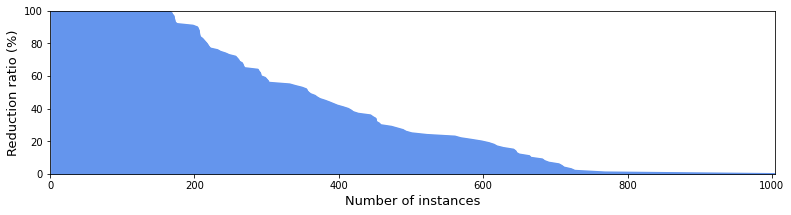}
  \caption{Distribution of reduction ratios over the instances in the MCC\label{fig:reduction}}
\end{figure}

For each edition of the MCC, a collection of about $30$ random
reachability properties are generated for each instance. We evaluated
the performance of SMPT using the formulas of the MCC'2020, on a
selection of $426$ Petri nets taken from instances with a reduction
ratio greater than $1\%$. (To avoid any bias introduced by models with
a large number of instances, we selected at most 5 instances with a
similar reduction ratio from each model.)

A pair of an instance and a formula is called a \emph{test case}. For
each test case, we check the formulas with and without the help of
reductions (using both the BMC and PDR methods in parallel) and with a
fixed timeout of \SI{120}{\second}. This adds up to a total of
$13\,265$ \emph{test cases} which required the equivalent of $447$
hours of CPU time.

\subsection{Impact on the number of solvable queries}

We report our results in the table of Fig.~\ref{fig:table}.  Out of
the almost $13\,000$ test cases, we were able to compute approximately
$7\,000$ results using reductions and only $3\,555$ without reductions
(so approximately twice more).

We compared our results with the ones provided by an
\emph{oracle}~\cite{oracle}, which gives the expected answer (as
computed by a majority of tools, using different techniques, during
the MCC competition). We achieve $100\%$ reliability on the benchmark;
meaning we always give the answer predicted by the oracle.

% \medskip

We give the number of computed results for four different categories
of test cases: \emph{Full} contains only the fully reducible instances
(the best possible case with our approach); while
\emph{Low}/\emph{Good}/\emph{High} correspond to instances with a
low/moderate/high level of reduction. We chose the limits for these
categories in order to obtain samples with comparable sizes.
We also have a general category,
\emph{All}, for the complete set of benchmarks.

\newcolumntype{x}[1]{>{\centering\arraybackslash\hspace{0pt}}p{#1}}
\begin{figure}[htb]
\vspace*{1mm}
  \begin{center}
    \begin {tabular}{l@{\ }l  x{0em} r x{0em}  rr x{0.5em} rr}%
      \toprule
      \multicolumn{2}{c}{\multirow{2}{*}{\begin{minipage}[c]{5em}
            \centering\textsc{Reduction Ratio ($r$)}
          \end{minipage}}} & \multicolumn{3}{c}{\multirow{2}{*}{\begin{minipage}[c]{4em}
                                \centering \textsc{\# Test\\ Cases}
                              \end{minipage}}} &
                                                 \multicolumn{5}{c}{\textsc{Results
                                                 (BMC/PDR)}}\\\cmidrule(rl){6-10}
	
      \multicolumn{2}{c}{}
                           &&&
                                              & \multicolumn{2}{c}{\textsc{With reductions}}
                           && \multicolumn{2}{c}{\textsc{Without}}\\\midrule
	
      \emph{All} & $r \in \; ]0, 1]$   &&  $13\,265$ && $6\,986$&
                           && $3\,555$ & ($3\,261$\,/\,$294$)\\
      \emph{Low} & $r \in \; ]0, 0.25[$   && $4\,586$ && $1\,662$& ($1\,532$\,/\,$130$)
                           && $1\,350$ & ($1\,247$\,/\,$103$)\\
      \emph{Good} & $r \in [0.25, 0.5[$   && $2\,823$ && $1\,176$
                                              &($1\,084$\,/\,$92$)  && $704$ & ($631$\,/\,$73$)\\
      \emph{High} & $r \in [0.5, 1[$   && $3\,298$ && $1\,591$ & ($1\,412$\,/\,$179$)   &&
                                                                                        $511$ & ($457$\,/\,$54$)\\
      \emph{Full} & $r = 1$ && $2\,558$ && $2\,557$ &  && $990$ & ($926$\,/\,$64$) \\\bottomrule
    \end{tabular}%}
  \end{center}\vspace*{-2mm}
\caption{Impact of the reduction ratio on the number of solved instances.\label{fig:table}}
\end{figure}
%\medskip

We observe that we are able to compute almost twice as many results
when we use reductions than without. This gain is greater on the
\emph{High} ($\times 3.1$) than on the \emph{Good} ($\times 1.7$)
instances. Nonetheless, the fact that the number of additional queries
solved using reductions is still substantial, even for a reduction
ratio under $50\%$, indicates that our approach can benefit from all
the reductions we can find in a model (and that our results are not
skewed by the large number of fully reducible instances).

In the special case of \textit{fully reducible} nets, checking a query
amounts to solving a linear system on the initial marking of the
reduced net. There are no iterations. Moreover this is the same system
for both the BMC and PDR procedures. For this category, we are able to
compute a result for all but one of the queries (that could be
computed using a timeout of $\SI{180}{\second}$). Most of these
queries can be solved in less than a few seconds.

When the distinction makes sense, we also report the number of cases
solved using BMC/PDR. (As said previously, the two procedures coincide
in category \emph{Full}, with reductions.) We observe that the
contribution of PDR is poor. This can be explained by several
factors. First, we restricted our implementation of PDR to monotonic
formulas (which represents $30\%$ of all properties). Among these, PDR
is useful only when we have an invariant that is true (meaning BMC
will certainly not terminate). On the other hand, PDR is able to give
answers on the most complex cases. Indeed, it is much more difficult
to prove an invariant than to find a counter-example (and we have
other means to try and find counter-examples, like simulation for
instance). This is why we intend to improve the performances and the
``expressiveness'' of our PDR implementation. Another factor, already
observed in~\cite{thierry-mieg_structural_2020}, is the existence of a
bias in the MCC benchmark: in more than $60\%$ of the cases, the
result follows from finding a counter-example (meaning an invariant
that is false or a reachability property that is true).

\subsection{Impact on computation time}

To better understand the impact of reductions on the computation time,
we compare the computation time, with or without reductions, for
each test case. These results do not take into account the time spent
for {reducing} each instance. This time is negligible when compared to
each test, usually in the order of $\SI{1}{\second}$. Also, we only need to reduce the net
once when checking the $30$
properties for the same instance.

\medskip
We display our results in Fig.~\ref{fig:solving_time}, where we give
four scatter plots comparing the computation time ``with'' ($y$-axis)
and ``without'' reductions ($x$-axis), for the \emph{Low},
\emph{Good}, \emph{High} and \emph{Full} categories of instances. Each
chart uses a logarithmic scale. We also display a histogram, for each
axis on the charts, that gives the density of points for a given
duration.  To avoid overplotting, we removed all the ``trivial''
properties (the bottom left part of the chart), that can be computed
with and without reduction in less than $\SI{10}{\milli\second}$.
These ``trivial'' queries ($507$ in total) correspond to instances with a small state
space or to situations where a counter-example can be found very
quickly.

\begin{figure}[htbp]
	\centering

	{
    \begin{subfigure}[b]{.49\textwidth}
			\centering
			\includegraphics[width=\textwidth]{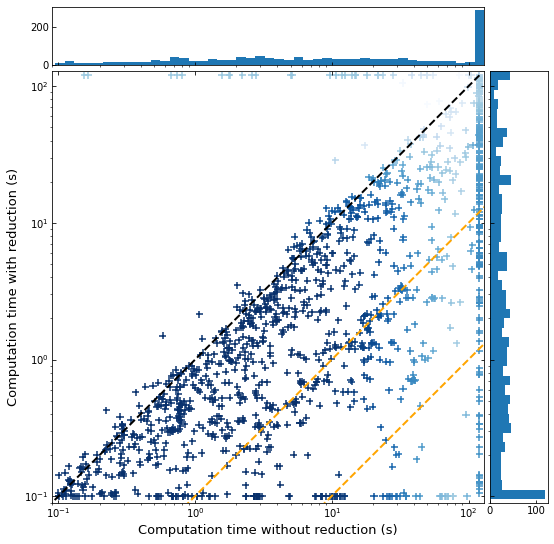}
      \caption{$r \in ]0, 0.25[$}
    \end{subfigure}
		\begin{subfigure}[b]{.49\textwidth}
			\centering
			\includegraphics[width=\textwidth]{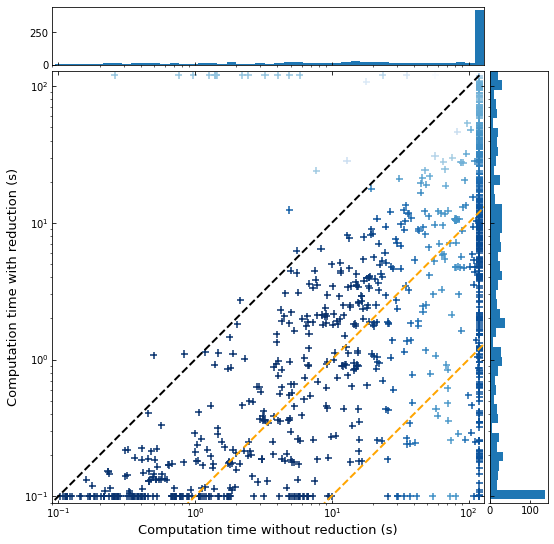}
      \caption{$r \in [0.25, 0.5[$}
		\end{subfigure}
		\begin{subfigure}[b]{.49\textwidth}
			\centering
			\includegraphics[width=\textwidth]{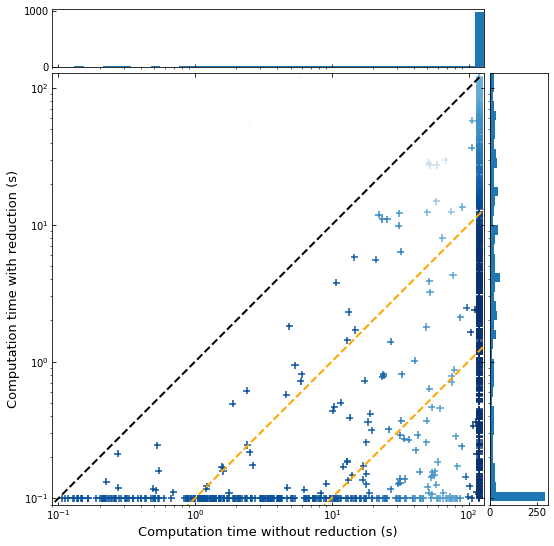}
      \caption{$r \in [0.5, 1[$}
    \end{subfigure}
		\begin{subfigure}[b]{.49\textwidth}
			\centering
			\includegraphics[width=\textwidth]{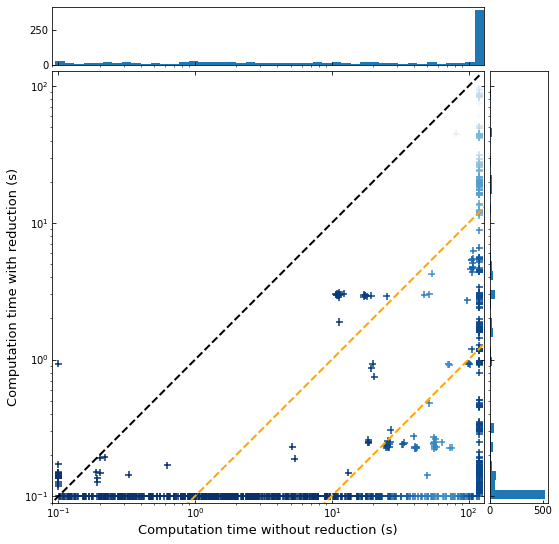}
      \caption{$r = 1$}
    \end{subfigure}}
	
	\caption{Comparing computation time, ``with'' ($y$-axis) and
          ``without'' ($x$-axis) reductions for categories \emph{Low} (a),
          \emph{Good} (b), \emph{High} (c)  and \emph{Full} (d).}
	
	\label{fig:solving_time}
\end{figure}

We observe that almost all the data points are below the diagonal,
meaning reductions accelerate the computation, with many test cases
exhibiting speed-ups larger than $\times 100$. We have added two
light-coloured, dashed lines to materialize data points with speed-ups
larger than $\times 10$ and $\times 100$ respectively.

On our $13\,265$ test cases, we timeout with reductions but compute a
result without on only $51$ cases ($0.4\%$). These exceptions can be
explained by border cases where the order in which transitions are
processed has a sizeable impact.

Another interesting point is the ratio of properties that can be
computed only using reductions. This is best viewed when looking at
the histogram values. A vast majority of the points in the charts are either on
the right border (computation without reductions timeout) or on the
$x$-axis (they can be computed in less than $\SI{10}{\milli\second}$
using reductions).

%%%%%%%%%%%%%%%%%%%%%%%%%%%%%%%%%%%%%%%%%%%%%%%%%%%%%%%%%%%%%%%%%%%%%%
%%%%%%%%%%%%%%%%%%%%%%%%%%%%%%%%%%%%%%%%%%%%%%%%%%%%%%%%%%%%%%%%%%%%%%

\section{Related work and conclusion}

We propose a new method to combine structural reductions with SMT
solving in order to check invariants on arbitrary Petri nets. While
this idea is not original, the framework we developed is new. Our main
innovation resides in the use of a principled approach, where we can
trace back reachable markings (between an initial net and its
residual) by means of a conjunction of linear equalities (the formula
$\tilde{E}$).
Basically, we show that we can adapt a SMT-based procedure for
checking a property on a net (that relies on computing a family of
formulas of the form $(\phi_i)_{i \in I}$) into a procedure that
relies on a reduced version of the net and formulas of the form
$(\phi_i \wedge \tilde{E})_{i \in J}$.

As a proof of concept, we apply our approach to two basic
implementations of the BMC and PDR procedures. Our empirical
evaluation shows promising results. For example, we observe that we
are able to compute twice as many results using reductions than
without. We believe that our approach can be adapted to more decision
procedures and could easily accommodate various types of
optimizations.

\subsection{Related work}

Our main theoretical results (the conservation theorems of
Sect.~\ref{sec:smt-based-model}) can be interpreted as examples of
\emph{reduction
  theorems}~\cite{lipton_reduction_1975,lamport_reduction_2016}, that
allow to deduce properties of an initial model ($N$) from properties
of a simpler, coarser-grained version ($N^R$). While these works are
related, they mainly focus on reductions where one can group a
sequence of transitions into a single, atomic action.
%  (where $N$ implements $N^R$)
Hence, in our context, they correspond to a restricted class of
reductions, similar to a subset of the agglomeration rules used
in~\cite{berthomieu_counting_2019}.

We can also mention approaches where the system is simplified with
respect to a given property, for instance by eliminating parts that
cannot contribute to its truth value, like with the slicing or
\emph{Cone of Influence} abstractions~\cite{clarke_model_1999} used in
some model checkers. Finding such ``parts'' (places and transitions)
in a Petri net is not always easy, especially when the formula
involves many places.  This is not a problem with our approach, since
we can always abstract away a place, as long as its effect is
preserved in the $E$-transform formula.

In~\cite{schmidt_tacas_2003}, the author uses net invariants to
compress the representation of markings. This approach is based on the
fact that place invariants provide linear constraints between the
markings of several places; like in our use of redundancy rules. But
the goal is to reduce the memory footprint when computing the explicit
state space, while verification is still performed on ``uncompressed''
markings. On the contrary, our approach can be used with symbolic
methods--working on a reduced version of the net--and can use more
general rules. For instance, it cannot benefit from rules that
agglomerate places.

In practice, we derive polyhedral abstractions using \emph{structural
  reductions}, a concept introduced by Berthelot
in~\cite{berthelot_transformations_1987}. In our work, we are
interested in reductions that preserve the reachable states. This is
in contrast with most works about reductions, where more powerful
transformations can be applied when we focus on specific properties,
such as the absence of deadlocks. Several tools use reductions for
checking reachability properties. TAPAAL~\cite{bonneland2019stubborn},
for instance, is an explicit-state model checker that combines
Partial-Order Reduction techniques and structural reductions and can
check properties on Petri nets with weighted arcs and inhibitor arcs.

A more relevant example is ITS
Tools~\cite{thierry-mieg_structural_2020,DBLP:journals/fuin/Thierry-Mieg21}, which combines several
techniques, including structural reductions and the use of SAT and SMT
solvers. This tool relies on efficient methods for finding
counter-examples---with the goal to invalidate an invariant---based on
the collaboration between pseudo-random exploration techniques; hints
computed by an SMT engine; and reductions that may simplify atoms in
the property or places and transitions in the net. It also describes a
semi-decision procedure, based on an over-approximation of the
state space, that may detect when an invariant holds (by ruling out
infeasible behaviours). This leads to a very efficient tool, able to
compute a result for most of the queries in our benchmark, when we
solve only $52\%$ of our test cases. Nonetheless, we are able to solve
$46$ queries with SMPT (with a timeout of \SI{120}{\second}) that are
not in the oracle results collected from ITS Tools~\cite{oracle};
which means that no other tool was able to compute a result on these queries.

It has to be kept in mind, though, that our goal is to study the
impact of polyhedral abstraction, in isolation from other
techniques. However, the methods described
in~\cite{thierry-mieg_structural_2020} provide many ideas for
improving our approach, such as: using linear arithmetic over
reals---which is more tractable than integer arithmetic---to
over-approximate the state space of a net; adding extra constraints to
strengthen invariants (for instance using the state equation or
constraints derived from traps); dividing up a formula into smaller
sub-parts, and checking them incrementally or separately; \dots\ But
the main lesson to be learned is that there is a need for a complete
decision procedure devoted to the proof of satisfiable invariants,
which further our interest in improving our implementation of PDR.

A byproduct of our work is to provide a partial implementation
of PDR that is correct and complete when the property is monotonic
(see Sect.~\ref{sec:smt-based-model}), even in the case of nets with
an infinite state space. Our current solution
can be understood as a restriction to the case of ``coverability properties'',
which seems to be the current state-of-the-art with Petri nets; see for
example~\cite{esparza_smt-based_2014} and
\cite{DBLP:conf/apn/KangBJ21}, or the extension of PDR to
``well-structured transition systems''~\cite{kloos_incremental_2013}.
The reachability problem for Petri nets or, equivalently, for Vector Addition
Systems with States (VASS) is decidable~\cite{kosaraju}. Even if this result is
based on a constructive proof, and its ``construction'' streamlined over
time~\cite{leroux2009general}, the classical
Kosaraju-Lambert-Mayr-Sacerdote-Tenney approach does not lead to a workable
algorithm. It is in fact a feat that this algorithm has been implemented at all,
see e.g. the tool \textsc{KReach}~\cite{dixon_kreach_2020}.
While the (very high) complexity of the problem means that no single algorithm
could work efficiently on all inputs, it does not prevent the existence of
methods that work well on some classes of problems. For example, several
algorithms are tailored for the discovery of counter-examples. We can mention the
tool \textsc{FastForward}~\cite{blondin_directed_2021}, that
explicitly targets the case of unbounded nets.
We can also mention the works on inductive procedures for infinite-state and/or
parametrized systems, such as the verification methods used in
Cubicle~\cite{conchon_cubicle_2012}; see
also~\cite{cimatti_infinite-state_2016,gurfinkel2016smt}.

\subsection{Follow up work}

We propose a new method that adapts our approach---initially developed
for model checking with decision
diagrams~\cite{berthomieu2018petri,berthomieu_counting_2019}---for use
with SMT solvers.

We have continued working with our polyhedral abstraction since our
initial publication in~\cite{pn2021}. In particular, we tried applying
our approach to the verification of properties more complex than
reachability, like with our recent work on the \emph{concurrent
  places} problem~\cite{spin2021}. The problem, in this case, is to
enumerate all pairs of places that can be marked together, for some
reachable states. In this work we defined a new data-structure that
precisely captures the structure of reduction constraints, what we call
the \textit{Token Flow Graph} (TFG), and we used the TFGs to
accelerate the computation of the concurrency relation

We also continued improving our adaptation of PDR, which is the most
promising part of our work and raises several interesting theoretical
problems. In this context, we recently proposed two new adaptions of
PDR, to deal with non-monotonic
formulas~\cite{DBLP:conf/tacas/AmatDH22}. But there is still a lot of
work to be done, like for instance concerning the completeness of our
new approach and/or its limits.

\subsection{Future work}

There is still ample room for improving our tool. We already mentioned
some ideas for enhancements that we could borrow from ITS Tools, but
we also plan to specialize our verification procedures in some
specific cases, for example when we know that a net is $1$-safe. A
first step should be to compare our performances with other tools in
more details. This is what motivates our participation to the next
edition of the MCC, with SMPT alone in the reachability examinations,
even though it is common knowledge that winning tools need to combine
several different techniques.

On a more theoretical side, we also identified a need to develop an
automated (or a semi-automatic) method to prove the correctness of new
reduction rules.

%%%%%%%%%%%%%%%%%%%%%%%%%%%%%%%%%%%%%%%%%%%%%%%%%%%%%%%%%%%%%%%%%%%%%%
%%%%%%%%%%%%%%%%%%%%%%%%%%%%%%%%%%%%%%%%%%%%%%%%%%%%%%%%%%%%%%%%%%%%%%

\bibliographystyle{fundam}
\bibliography{bibfile}

\end{document}

%%% Local Variables:
%%% mode: latex
%%% TeX-master: t
%%% End: